\DeclareMathOperator*{\argmin}{arg\,min}
\newtheorem{lemma}{Lemma}
\begin{document}

% Fix here
\preprint{APS/123-QED}

\title{Efficient Strategies for Reducing Sampling Error in Quantum Krylov Subspace Diagonalization}% Force line breaks with \\
% \thanks{A footnote to the article title}%

\author{Gwonhak Lee}
\affiliation{SKKU Advanced Institute of Nanotechnology (SAINT), Sungkyunkwan University, Suwon 16419, Korea}
 %\altaffiliation[Also at ]{Physics Department, XYZ University.}%Lines break automatically or can be forced with \\

\author{Seonghoon Choi}%
\affiliation{Chemical Physics Theory Group, Department of Chemistry, University of Toronto, Toronto, Ontario M5S 3H6, Canada}
\affiliation{Department of Physical and Environmental Sciences, University of Toronto Scarborough, Toronto, Ontario M1C 1A4, Canada}

\author{Joonsuk Huh}
\email{joonsukhuh@gmail.com}
\affiliation{Department of Chemistry, Yonsei University,  Seoul 03722, South Korea.}%

\author{Artur F. Izmaylov}%
\email{artur.izmaylov@utoronto.ca}
%\orcid{}
\affiliation{Chemical Physics Theory Group, Department of Chemistry, University of Toronto, Toronto, Ontario M5S 3H6, Canada}
\affiliation{Department of Physical and Environmental Sciences, University of Toronto Scarborough, Toronto, Ontario M1C 1A4, Canada }

\date{\today}% It is always \today, today,
             %  but any date may be explicitly specified
% Ver 1. QKSD Abstract
\begin{abstract}
    Within the realm of early fault-tolerant quantum computing (EFTQC), quantum Krylov subspace diagonalization (QKSD) has emerged as a promising quantum algorithm for the approximate Hamiltonian diagonalization via projection onto the quantum Krylov subspace.
    However, the algorithm often requires solving an ill-conditioned generalized eigenvalue problem (GEVP) involving erroneous matrix pairs, which can significantly distort the solution.
    Since EFTQC assumes limited-scale error correction, finite sampling error becomes a dominant source of error in these matrices.
    This work focuses on quantifying sampling errors during the measurement of matrix element in the projected Hamiltonian examining two measurement approaches based on the Hamiltonian decompositions: the linear combination of unitaries and diagonalizable fragments.
    To reduce sampling error within a fixed budget of quantum circuit repetitions, we propose two measurement strategies: the shifting technique and coefficient splitting. The shifting technique eliminates redundant Hamiltonian components that annihilate either the bra or ket states, while coefficient splitting optimizes the measurement of common terms across different circuits.
    Numerical experiments with electronic structures of small molecules demonstrate the effectiveness of these strategies, reducing sampling costs by a factor of 20-500.
\end{abstract}
%\keywords{Suggested keywords}%Use showkeys class option if keyword
                              %display desired
\maketitle

%\tableofcontents

\section{\label{sec:Introduction}Introduction}

Recent advancements in quantum computing devices, particularly in terms of the scale and coherence time \cite{Wintersperger2023-wc, Debnath2016, 10.1063/1.4966970, PhysRevLett.117.210502, Qiang2018, PhysRevLett.109.060501, Wendin_2017}, have significantly heightened expectations for their ability to perform efficient quantum simulations.
These advancements promise to deepen our understanding of many-body quantum systems, such as electronic structure in chemical systems \cite{doi:10.1021/acs.chemrev.8b00803, RevModPhys.92.015003, doi:10.1021/acs.chemrev.9b00829}.
This anticipation is driven by the expected stability, controllability, and scalability of universal quantum computers being developed across various platforms, including ion traps \cite{Debnath2016, 10.1063/1.4966970}, photons \cite{PhysRevLett.117.210502, Qiang2018}, and superconductors \cite{PhysRevLett.109.060501, Wendin_2017}.

Currently, the field is progressing through the era of noisy and intermediate-scale quantum computers (NISQ) \cite{Preskill2018quantumcomputingin, RevModPhys.94.015004}.
This phase marks a regime of quantum computation which is hard to be simulated using classical computers, while quantum error correction is absent due to the limited scalability, inherent noises, and decoherence of current devices.
Within this context, the variational quantum eigensolver (VQE) has been primarily discussed as an algorithm for quantum simulation \cite{Cerezo2021, TILLY20221}.
Based on the variational principle, VQE employs quantum-classical hybrid optimization of a parameterized ansatz, implementable within a shallow quantum circuit to approximate specific eigenstates of the target system.
However, the expected quantum advantage\textemdash based on the classical hardness of simulating the ansatz\textemdash is negated by errors in estimating cost function for each optimization step, particularly associated with barren plateau problem \cite{McClean2018, Cerezo2021_BP, Wang2021, Cerezo_2021_BP_derivative, Arrasmith2021effectofbarren}.
Furthermore, the absence of the error correction results in the errors accumulating significantly, thus limiting the scalable quantum advantage in VQE.

This naturally shifts our attention towards early fault-tolerant quantum computing (EFTQC) as a viable next step beyond NISQ era.
The feasibility of EFTQC is further supported by decreasing hardware error rates that are approaching the threshold for the error correction \cite{Xue2022-kr, Blume-Kohout2017-oj, Postler2022-iv} and an emergence of a small-scale demonstration of logical qubits \cite{Bluvstein2024}.
EFTQC is introduced within a framework of scale-limited quantum error correction, where the error rate for logical qubits increases with the size of the quantum circuit \cite{katabarwa2023early}.
Consequently, unlike fully fault-tolerant quantum computing, EFTQC cannot arbitrarily scale the number of logical qubits or the use of non-Clifford operations, thereby limiting the practical implementation of quantum phase estimation \cite{PhysRevLett.83.5162}.
To address these limitations, EFTQC algorithms typically aim to compromise between the circuit size and the number of repetitions.
Quantum phase estimation requires $M=O(|\gamma_0|^{-2})$ repetitions of a circuit with real-time propagators ($e^{-i\hat{H}t_k}$) with the total propagation time, $\sum_kt_k=O(\epsilon_{\mathrm{alg}}^{-1})$, where $|\gamma_0|$ and $\epsilon_{\mathrm{alg}}$ denote the initial overlap and algorithm accuracy, respectively \cite{PhysRevLett.83.5162}.
Here, the real-time propagator is usually approximately implemented by Suzuki-Trotterization \cite{fractal_decomposition, Suzuki_book}.
In contrast, EFTQC algorithms use shorter propagators, $t<O(\epsilon_{\mathrm{alg}}^{-1})$, but increase the number of repetitions, $M>O(|\gamma_0|^{-2})$ \cite{Somma_2019, Zhang2022computingground, PRXQuantum.4.020331, Ding2023simultaneous, parrish2019quantum, Stair2020-hq}.
An alternative approach within EFTQC utilizes the block encoding scheme \cite{Kirby2023exactefficient}, which, while exact, demands significantly more resources than Trotterization with a minimal Trotter steps.
Although reducing Trotterization errors necessitates more Trotter steps, thus approaching the resource demands of block encoding, the limitations of near-future hardware render small-scale Trotterization a more feasible option.

Within the domain of EFTQC algorithms, quantum Krylov subspace diagonalization (QKSD) is being explored as a promising candidate for quantum simulation algorithm \cite{parrish2019quantum, Stair2020-hq}.
QKSD employs quantum circuits to project the Hamiltonian onto the Krylov subspace, a reduced-dimensional space that is classically solvable.
This approach is potentially feasible because the extremal eigenvalues of the projected matrix converge exponentially fast to those of the original Hamiltonian, provided the projection remains unperturbed and the overlap between the eigenstate and the initial state is large \cite{Theory_QKSD}.
However, this advantage is counterbalanced by the challenge associated with ill-conditioning of the eigenvalue problem, where perturbations in the projected matrix can significantly distort the accuracy of the approximated eigenvalues \cite{Theory_QKSD, Lee2024samplingerror}.
These perturbations mainly arise from imperfect error correction, Trotterization error, and finite sampling error.
As EFTQC stabilizes and expands, the first two factors can be suppressed.
However, despite the discussion of the measurement problem in QKSD \cite{Lee2024samplingerror}, strategies to tackle this problem have not been suggested.

QKSD involves measuring the matrix elements, $\bm{H}_{kl}=\braket{\phi_k |\hat{H}|\phi_l}$, across a finite basis $\{\ket{\phi_k}=e^{-i\hat{H}k\Delta_t}\ket{\phi_0}\}$ that spans the Krylov subspace with a reference state $\ket{\phi_0}$.
While other bases have been proposed, such as $\ket{\phi_k}=\hat{H}^k\ket{\phi_0}$\cite{PRXQuantum.2.010333} and $\ket{\phi_k}=e^{-\hat{H}k\beta}\ket{\phi_0}$\cite{Motta2020-oi}, we focus on QKSD with the real-time evolution operator due to its simplicity and practical viability for EFTQC.
Consequently, a primary objective in this scenario is to minimize the sampling error when measuring the matrix elements.
Although this specific measurement problem has not been tackled, several measurement strategies have been proposed for the standard expectation values \cite{Crawford2021efficientquantum, Yen2023, doi:10.1021/acs.jctc.3c00218, doi:10.1021/acs.jctc.0c00008, doi:10.1021/acs.jctc.2c00837, Choi2023fluidfermionic, PRXQuantum.2.040320, 10.1063/1.5141458}.
In general, the direct measurement of $\braket{\psi|\hat{H}|\psi}$ with a single circuit is impractical, as measurement is constrained to the Pauli-Z basis, which requires the implementation of a unitary operator that fully diagonalizes $\hat{H}$.
As an alternative, $\hat{H}$ is decomposed into a linear combination of fragmented Hamiltonians, which can be efficiently diagonalized with implementable unitaries, and the results for each fragment are aggregated \cite{doi:10.1021/acs.jctc.0c00008, 10.1063/1.5141458, Crawford2021efficientquantum}.
The goal then becomes to minimize the sampling error by optimally fragmenting $\hat{H}$ and allocating the number of circuit repetitions among these fragments.
This optimization can be formulated as a combinatorial problem, akin to NP-hard clique covering problems \cite{doi:10.1021/acs.jctc.0c00008, 10.1063/1.5141458}, for which a heuristic solution has been developed \cite{Crawford2021efficientquantum}.
Additionally, this issue has been expanded into a continuous optimization problem that considers approximated covariances between fragments \cite{Choi2023fluidfermionic, Yen2023}.
On the other hand, randomized measurement strategies, known as classical shadow techniques, have been developed \cite{Huang2020, hadfield2020lbcs, PhysRevLett.127.030503}.
While these methods are superior when measuring multiple expectation values simultaneously, deterministic methods generally outperform them when measuring a single expectation value $\braket{\hat{H}}$ \cite{Yen2023, doi:10.1021/acs.jctc.3c00218, Nakamura_2024}.
Since our case involves measuring $\braket{\phi_k|\hat{H}|\phi_l}$, we focus on deterministic measurement strategies in this work.

These developments resolve the measurement problems associated with standard expectation values.
In this paper, we aim to adapt these methods to the QKSD framework.
To accomplish this, we analyze the measurement problem of the matrix elements with two decomposition scenarios for $\hat{H}$: linear combination of unitaries (LCU) and fragmented Hamiltonians (FH).
Specifically, we focus on quantifying and mitigating the sampling errors for these scenarios, applying strategies originally designed for standard expectation values to enhance measurement accuracy in QKSD.
Notably, the strategies that we propose can be applied to the general measurement of transitional amplitudes, $\braket{\phi_k|\hat{H}|\phi_l}$, which can be utilized for the design of algorithms beyond QKSD.

The organization of the paper is as follows.
First, a brief preliminary of the QKSD is provided in Section \ref{sec:QKSD}, followed by the analyses of the two measurement methods for the matrix elements and the associated errors in Section \ref{sec:Measurement of QKSD Matrix Elements}.
Section \ref{sec:Sampling Cost Reduction} demonstrates how conventional methods \cite{Crawford2021efficientquantum, Yen2023, Loaiza_BLISS, Loaiza_2023} are converted to reduce the sampling errors for QKSD, highlighting that a method initially devised for reducing the simulation cost of LCU \cite{Loaiza_BLISS, Loaiza_2023} is transferable to the measurement problem.
Finally, we numerically validate these reduction methods by solving the electronic structure problems of small molecules as case studies in Section \ref{sec:numerical_analysis}.

\section{QKSD \label{sec:QKSD}}
Before considering the measurement problem in QKSD, this section reviews the QKSD method for estimating the spectrum of a Hamiltonian $\hat{H}$, as originally introduced in \cite{parrish2019quantum, Stair2020-hq}.

QKSD estimates approximated eigenstates of a Hamiltonian $\hat{H}$ with the following ansatz:
\begin{equation}\label{eq:QKSD_ansatz}
    \ket{\psi(\bm{w})} := \frac{1}{\mathcal{N}}\sum_{k=0}^{n-1} w_k\hat{B}^k\ket{\phi_0},
\end{equation}
where $\mathcal{N}$ is the normalization factor, and $\bm{w} = (w_0, \cdots, w_{n-1})\in \mathbb{C}^{n}$, and $n$ is the Krylov order.
This ansatz fully covers vectors in the Krylov subspace, $\mathcal{K}_n=\mathrm{span}(\{\ket{\phi_0}, \cdots \ket{\phi_{n-1}}\})$, where $\ket{\phi_k} = \hat{B}^{k}\ket{\phi_0}$ is defined by a reference state $\ket{\phi_0}$ and the base operator,
\begin{equation}
    \hat{B}=e^{-i\Delta_t \hat{H}}.
\end{equation}
This exponential function is approximated by the Trotterization.
There are other choices for $\hat{B}$, such as $\hat{H}$, which is analogous to the classical Krylov method \cite{PRXQuantum.2.010333}, and imaginary time evolution ($e^{-\beta \hat{H}}$) \cite{Motta2020-oi}.
Although we only focus on the real-time evolution operator, which is widely discussed, the methods that will be described in the Section \ref{sec:Sampling Cost Reduction} can be expanded to the other choices of $\hat{B}$ because they are approximated by linear combinations or products of real-time propagators.

Using the ansatz (Eq.\eqref{eq:QKSD_ansatz}) in the variational principle with $\bm{w}$ as optimized parameters leads to the following generalized eigenvalue problem (GEVP)
\begin{equation}\label{eq:gevp}
    \bm{H}\bm{w} = \bm{S}\bm{w}E^{(n)},
\end{equation}
where $E^{(n)}$ is an approximate eigenvalue, and the $n\times n$ Hamiltonian matrix $\bm{H}$ and overlap matrix $\bm{S}$ are defined as
\begin{align}
    \bm{H}_{kl} = \bra{\phi_k}\hat{H}\ket{\phi_l} =&\braket{\phi_0|\hat{B}^{\dagger k}\hat{H}\hat{B}^{l}|\phi_0}=\bm{H}_{0,l-k} \label{eq:QKSD_H}\\
    \bm{S}_{kl} = \braket{\phi_k|\phi_l}=&\braket{\phi_0|\hat{B}^{\dagger k}\hat{B}^{l}|\phi_0}=\bm{S}_{0,l-k}\label{eq:QKSD_S}.
\end{align}
These matrices are obtained by quantum algorithms using EFTQC circuits and measurements.
Note that $\bm{S}$ has the structure of a Toeplitz matrix, $\bm{S}_{k,l}=\bm{S}_{0,l-k}$, because $\hat{B}$ is unitary.
Furthermore, since $[\hat{H}, \hat{B}]=0$, $\bm{H}$ becomes a Toeplitz matrix as well.
Therefore, rather than $n^2$, only $n$ elements are required to construct each matrices.

For systems with large energy gaps, the lowest solution of Eq.\eqref{eq:gevp} converges to the ground state energy of $\hat{H}$ exponentially fast with $n$ \cite{Theory_QKSD}.
However, the GEVP in Eq.\eqref{eq:gevp} can become ill-conditioned for larger $n$'s, which makes QKSD sensitive to noise in matrix elements of $\bm{H}$ and $\bm{S}$.

\section{\label{sec:Measurement of QKSD Matrix Elements}Measurement of QKSD Matrix Elements}

In the quantum subroutine that estimates the elements of $\bm{H}$ and $\bm{S}$, the quantum uncertainty predominantly induces the matrix perturbation.
This perturbation, coupled with the ill-conditioned GEVP (Eq.\eqref{eq:gevp}), may introduce a significant error in the solution.
In this section, we develop methods for estimating Hamiltonian matrix elements,
\begin{equation}\label{eq:matrix_element}
    \bm{H}_{0k}=\braket{\phi_0 |\hat{H}|\phi_k},
\end{equation}
and the analysis of the associated sampling error.

Since, rather than the matrix element, only the measurements of standard expectation values of $\braket{\Phi|\hat{O}_j|\Phi}$ with easily diagonalizable $\hat{O}_j$'s are possible at the circuit level, it is necessary to express Eq.\eqref{eq:matrix_element} in terms of $\braket{\Phi|\hat{O}_j|\Phi}$ using certain states $\ket{\Phi}$ and simple operators $\hat{O}_j$.
To translate Eq.\eqref{eq:matrix_element} into standard expectations, we consider two approaches: the Hadamard and the extended swap tests.
In both approaches, the problem is addressed by partitioning the Hamiltonian into diagonalizable Hermitian or implementable unitary operators.
In the Hadamard test, the Hamiltonian is presented as an LCU: $\hat{H}=\sum_j\beta_j\hat{U}_j$, where each unitary $\hat{U}_j$ is implemented to estimate the overlap between $\ket{\phi_0}$ and $\hat{U}_j\ket{\phi_k}$ (Fig.~\ref{fig:ht_ckt}).
In the extended swap test, $\hat{H}$ is decomposed as a sum of fragments: $\hat{H}=\sum_j\hat{V}^\dagger_j \hat{D}_j \hat{V}_j$, where $\hat{D}_j$ is diagonal and the corresponding diagonalizing unitary $\hat{V}_j$ can be implemented efficiently.
Then, the extended swap test is conducted with the circuits of Hadamard tests which estimate $\braket{\phi_0|\phi_k}$, while including the measurement of system qubits with the basis represented by $\hat{V}_j$ (Fig.~\ref{fig:ex_swap_ckt}).
Note that in the extended swap test, the overlap matrix elements can be measured simultaneously by measuring the ancilla qubit, which corresponds to a Hadamard test circuit that measures $S_{0,k}=\braket{\phi_0|e^{-ik\Delta_t\hat{H}}|\phi_0}$.
This simultaneous measurement capability is not available in the Hadamard test when measuring the Hamiltonian in its LCU form.
% In the rest of this section, we discuss these two methods in detail and quantify the associated sampling error.

This work mainly focuses on analyzing and improving the sampling error determined by the decomposition of the observable, which is not required for the case of the overlap matrix element, $\bm{S}_{0k}=\braket{\phi_0|\hat{\mathbb{I}}|\phi_k}$.
In a previous work \cite{Lee2024samplingerror}, it was shown that the sampling variance of $\bm{S}_{0k}$ are determined identically across the test algorithms as:
\begin{equation}
\begin{split}
    \mathrm{Var}&[\bm{S}_{0k};m_R, m_I]=\\
    &\frac{1}{M}\left(\frac{1-\mathrm{Re}[\bm{S}_{0k}]^2}{m_R}+\frac{1-\mathrm{Im}[\bm{S}_{0k}]^2}{m_I}\right),
\end{split}
\end{equation}
where $m_{R}$ and $m_{I}$ are the fraction of total shots, $M$, allocated to measure real and imaginary parts of $\bm{S}_{0k}$, respectively ($m_R+m_I = 1$).
The ideal allocation that minimizing the variance is $m_R^{(\mathrm{opt})}\propto (1-\mathrm{Re}[\bm{S}_{0k}]^2)^{1/2}$ and $m_I^{(\mathrm{opt})}\propto (1-\mathrm{Im}[\bm{S}_{0k}]^2)^{1/2}$, which is, however, infeasible to be estimated before measuring $\bm{S}_{0k}$.
If we take Haar averaging of the states, $\ket{\phi_0}$ and $\ket{\phi_k}$, $m_R=m_I=1/2$ is achieved from Lemma \ref{lemm:haar_reim_square_independent} in Appendix \ref{sec:appendix_asymptotic_proof}.
The corresponding variance is related to the amplitude of the matrix element as
\begin{equation}
    \mathrm{Var}[\bm{S}_{0k};1/2,1/2] = \frac{2}{M}\left(2 - |\bm{S}_{0k}|^2\right).
\end{equation}

In the rest of this section, we quantify the sampling error associated with the Hamiltonian matrix elements and examine how the decomposition of $\hat{H}$ affects this error.

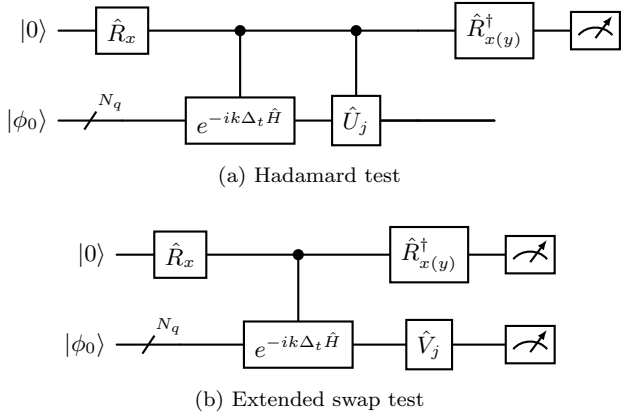
\begin{figure}[t]
    \centering
    \subfloat[Hadamard test]{
        \label{fig:ht_ckt}
        \begin{quantikz}
        \lstick{$\ket{0}$}      & \gate{\hat{R}_x}       & \ctrl{1} & \ctrl{1} & %\slice{$\ket{\Phi_{0k;j}}$}
        & \gate{\hat{R}_{x(y)}^\dagger}  & \meter{} \\
        \lstick{$\ket{\phi_0}$} & \qwbundle{N_q} & \gate{e^{-ik\Delta_t\hat{H}}} & \gate{\hat{U}_j} & \qw & \qw
        \end{quantikz}
    }
    \\
    \subfloat[Extended swap test]{
        \label{fig:ex_swap_ckt}
        \begin{quantikz}
        \lstick{$\ket{0}$}      & \gate{\hat{R}_x} & \ctrl{1} %\slice{$\ket{\Phi_{0k}}$}
        & \gate{\hat{R}_{x(y)}^\dagger}  & \meter{} \\
        \lstick{$\ket{\phi_0}$} & \qwbundle{N_q} & \gate{e^{-ik\Delta_t\hat{H}}}      &\gate{\hat{V}_j} & \meter{}
        \end{quantikz}
    }
    \caption{
    Circuit diagram for (a) the Hadamard test and (b) the extended swap test to estimate the $j$-th fragment of the Hamiltonian matrix element, $\braket{\phi_0|\hat{H}|\phi_k}$(see Eqs.\eqref{eq:LCU_measurement} and \eqref{eq:EST_measurement}.)
    Here, $\hat{R}_{x(y)}$ operator rotates $\hat{\sigma}_z$ basis into $\hat{\sigma}_x$ ($\hat{\sigma}_y$) basis and is adopted to estimate the real (imaginary) part of the amplitude for the second operator applied to the ancilla qubit.
    }
\end{figure}

\subsection{Hadamard Test}
An LCU decomposition of $\hat{H}$ is
\begin{equation}\label{eq:LCU}
    \hat{H}=\sum_{j=1}^{N_\beta}\beta_j \hat{U}_j,
\end{equation}
where $\hat{U}_j$ is unitary, and $\beta_j$ is a real and positive coefficient.
%This is possible because $\hat{U}_j$ can absorb the complex phase.
Such decomposition was originally motivated in the context of Hamiltonian simulation problem \cite{LCU_childs} and then expanded to the measurement problem \cite{LCU_Izmaylov, LCU_Love}.
The simulation cost in LCU based approach scales linearly with the L1 norm of the coefficients, $\|\bm{\beta}\|_1=\sum_j|\beta_j|$.
In the Hadamard test, we are going to show that the sampling cost to estimate Eq.\eqref{eq:matrix_element} within a certain level of accuracy is proportional to $\|\bm{\beta}\|_1^2$.

The matrix element in Eq.\eqref{eq:matrix_element} can be viewed as the weighted sum of overlaps between $\ket{\phi_0}$ and $\hat{U}_j \ket{\phi_k}$:
\begin{equation}
    \braket{\phi_0|\hat{H}|\phi_k}=\sum_{j=1}^{N_\beta}\beta_j \braket{\phi_0|\hat{U}_j|\phi_k}.
\end{equation}
Figure \ref{fig:ht_ckt} depicts the circuit that estimates $\braket{\phi_0|\hat{U}_j|\phi_k}$.
To derive the sampling cost, Eq.\eqref{eq:matrix_element} is translated to standard expectations as
\begin{equation}\label{eq:LCU_measurement}
    \braket{\phi_0|\hat{H}|\phi_k} = \sum_{j=1}^{N_\beta}\beta_j\braket{\Phi_{0k;j}|(\hat{\sigma}_x + i\hat{\sigma}_y)\otimes \hat{\mathbb{I}}|\Phi_{0k;j}},
\end{equation}
where
\begin{equation}\label{eq:LCU_state}
    \ket{\Phi_{0k;j}} := \frac{1}{\sqrt{2}}
    \left(\ket{0}\ket{\phi_0}+\ket{1}\hat{U}_j\ket{\phi_k}\right)
\end{equation}
is prepared with an additional qubit and the conditional evolution unitary.
The additional qubit is measured in $\hat{\sigma}_x$ and $\hat{\sigma}_y$ bases, corresponding to the real and imaginary parts of $\braket{\phi_0|\hat{U}_j|\phi_k}$, respectively.
Thus, $2N_\beta$ independent measurements of $\hat{O}_{R}=\hat{\sigma}_x\otimes \hat{\mathbb{I}}$ and $\hat{O}_{I}= \hat{\sigma}_y\otimes \hat{\mathbb{I}}$ with a set of states $\{\ket{\Phi_{0k;j}}\}_{j=1}^{N_\beta}$ complete the total estimation of a matrix element in Eq.\eqref{eq:LCU_measurement}.

For measuring the expectation value of Eq.\eqref{eq:LCU_measurement}, the variance is:
\begin{equation}\label{eq:LCU_variance}
\begin{split}
    \mathrm{Var}^{\mathrm{(LCU)}}&[\bm{H}_{0k};\bm{m}]=\\
    &\frac{1}{M}\left(\sum_{j=1}^{N_\beta}\sum_{X\in \{R, I\}}\frac{\beta_j^2}{m_{jX}}\mathrm{Var}[\hat{O}_X]_{\Phi_{0k;j}}\right),    
\end{split}
\end{equation}
where
\begin{equation}\label{eq:LCU_partial_variance}
\begin{split}
    \mathrm{Var}[\hat{O}_{X}]_{\Phi_{0k;j}}=&\braket{\hat{O}_X^2}_{\Phi_{0k;j}}-\braket{\hat{O}_X}_{\Phi_{0k;j}}^2 \\
    =&\begin{cases}
        1 - \mathrm{Re}[\braket{\phi_0|\hat{U}_j|\phi_k}]^2, & \ X = R\\
        1 - \mathrm{Im}[\braket{\phi_0|\hat{U}_j|\phi_k}]^2, & \ X = I.
    \end{cases}
\end{split}
\end{equation}
Also, $M$ denotes the total number of shots to measure $\bm{H}_{0k}$, and $m_{jX}$ is the fraction of shots for $\braket{\Phi_{0k;j}|\hat{O}_X|\Phi_{0k;j}}$ ($\sum_{jX}m_{jX}=1$.)

Given LCU decomposition, finding the optimal shot allocation $\bm{m}$ to minimize Eq.\eqref{eq:LCU_variance} is a convex problem, which is analytically solved by 
\begin{equation}\label{eq:LCU_optimal_allocation}
    m^{\mathrm{(opt)}}_{jX}\propto\beta_j {\mathrm{Var}[\hat{O}_X]_{\Phi_{0k;j}}^{1/2}}.
\end{equation}
However, since $\mathrm{Var}[\hat{O}_X]_{\Phi_{0k;j}}$ are not known in advance,
they are estimated by taking Haar-averaging over the states $\ket{\phi_0}$ and $\ket{\phi_k}$.
This results in a sub-optimal allocation instead, $m^{\mathrm{(subopt)}}_{jX}\propto\beta_j$, which is independent of the unknown variances.
Furthermore, such allocation leads to the averaged variance as shown below:
\begin{equation}
\begin{split}\label{eq:asymptotic_LCU_variance}
    V^{(\mathrm{LCU})} := &\mathbb{E}_{\phi_0, \phi_k}\left[\mathrm{Var}^{(\mathrm{LCU})}\left[\bm{H}_{0k};\bm{m}^{(\mathrm{subopt})}\right]\right]\\
= &\frac{2\|\bm{\beta}\|_1^2}{M}\left(2 - \frac{1}{d}\right),
\end{split}
\end{equation}
where $d$ is the dimension of the Hilbert space, which is generally exponentially large.
The proof of Eq.\eqref{eq:asymptotic_LCU_variance} is provided in Appendix \ref{sec:appendix_asymptotic_proof}.
The number of total shots more than
\begin{equation}\label{eq:LCU_samp_cost}
    M\approx\frac{4\|\bm{\beta}\|_1^2}{\epsilon^2}
\end{equation}
is required to make the total uncertainty less than $\epsilon$ with a high probability, which shows that the sampling cost is proportional to $\|\bm{\beta}\|_1^2$.
This result indicates that an LCU decomposition of $\hat{H}$ with lower $\|\bm{\beta}\|_1$ can potentially reduce the number of samplings, while maintaining accuracy.

\subsection{Extended Swap Test}
The extended swap test \cite{parrish2019quantum} estimates the matrix element through the measurement of
\begin{equation}\label{eq:EST_measurement}
    \braket{\phi_0|\hat{H}|\phi_k}=\braket{\Phi_{0k}|\left(\hat{\sigma}_x+i\hat{\sigma}_y\right)\otimes \hat{H}|\Phi_{0k}},
\end{equation}
where
\begin{equation*}
    \ket{\Phi_{0k}}:=\frac{1}{\sqrt{2}}\left(\ket{0}\ket{\phi_0}+\ket{1}\ket{\phi_k}\right).
\end{equation*}
% Compare the circuits here(or later?).

However, the direct measurement of Eq.\eqref{eq:EST_measurement} is possible only when $\hat{H}$ is an Ising form. 
In general, $\hat{H}$ can be expressed as a sum of diagonalizable fragment Hamiltonians:
\begin{equation}\label{eq:fragmented_hamiltonian}
    \hat{H} = \sum_{j=1}^{N_\gamma} \hat{H}_j = \sum_{j=1}^{N_\gamma} \hat{V}_j^\dagger\hat{D}_j\hat{V}_j,
\end{equation}
where $\hat{V}_{j}$ can be efficiently implemented to diagonalize $\hat{H}_j$ onto the computational basis, yielding an Ising Hamiltonian $\hat{D}_j$.
Then, the measurement can be performed for each $\hat{H}_j$.
For example, if $\hat{H}_j$ is composed of mutually commuting Pauli operators, the unitary $\hat{V}_j$ can be efficiently determined and implemented as a Clifford circuit with the gate count of $O(N_q^2)$, where $N_q$ is the number of qubits \cite{zheng2018depth, gokhale2019minimizing, Crawford2021efficientquantum}.

After substituting $\hat{H}$ in Eq.\eqref{eq:EST_measurement} by Eq.\eqref{eq:fragmented_hamiltonian}, the total variance of the estimator is
\begin{equation}\label{eq:FH_variance}
    \mathrm{Var}^{\mathrm{(FH)}}[\bm{H}_{0k};\bm{m}]=\frac{1}{M}\left(\sum_{j=1}^{N_\gamma}\sum_{X\in\{R, I\}}\frac{\mathrm{Var}[\hat{O}_{jX}]_{\Phi_{0k}}}{m_{jX}}\right),
\end{equation}
where $\hat{O}_{j,R(I)}=\hat{\sigma}_{x(y)}\otimes \hat{H}_j$, and the quantum variance of $\hat{O}_j$ is
\begin{equation}\label{eq:FH_partial_variance}
    \mathrm{Var}[\hat{O}_{jX}]_{\Phi_{0k}}=\begin{cases}
        \braket{\hat{O}_{jR}^2}_{\Phi_{0k}} - \mathrm{Re}[\braket{\phi_0|\hat{H}_j|\phi_k}]^2
         & X=R, \\
        \braket{\hat{O}_{jI}^2}_{\Phi_{0k}} - \mathrm{Im}[\braket{\phi_0|\hat{H}_j|\phi_k}]^2
         & X=I,
    \end{cases}
\end{equation}
with 
\begin{equation*}
\braket{\hat{O}_{jR}^2}_{\Phi_{0k}}=\braket{\hat{O}_{jI}^2}_{\Phi_{0k}}=\frac{1}{2}\left(\braket{\phi_0|\hat{H}_j^2|\phi_0}+\braket{\phi_k|\hat{H}_j^2|\phi_k}\right).    
\end{equation*}

As in LCU decomposition, the optimal shot allocation, $m_{jX}^{\mathrm{(opt)}}\propto \mathrm{Var}[\hat{O}_{jX}]_{\Phi_{0k;j}}^{1/2}$, is hard to estimate in advance.
In order to find a sub-optimal allocation, $\bm{m}^{\mathrm{(subopt)}}$, we perform Haar averaging on the variance over the states, $\ket{\phi_0}$ and $\ket{\phi_k}$.
According to the result in Appendix \ref{sec:appendix_asymptotic_proof}, this averaging produces $m^{\mathrm{(subopt)}}_{jR(I)}\propto (\mathrm{Tr}[\hat{H}_j^2]/d)^{1/2}$.
Furthermore, for the case of Pauli decomposition, the value of $(\mathrm{Tr}[\hat{H}_j^2]/d)^{1/2}$ is efficiently determined as the L2 norm of Pauli coefficients in $\hat{H}_j$, as elaborated in Appendix \ref{sec:appendix_pauli_grouping}.

The total averaged variance is \begin{equation}\label{eq:asymptotic_FH_variance}
\begin{split}
    V^{\mathrm{(FH)}}:=&\mathbb{E}_{\phi_0, \phi_k}\left[\mathrm{Var}^{(\mathrm{FH})}\left[\bm{H}_{0k};\bm{m}^{\mathrm{(subopt)}}\right]\right]\\
    =&\frac{2\|\bm{\gamma}\|_1^2}{M}\left(2-\frac{1}{d}\right),
\end{split}
\end{equation}
where $\|\bm{\gamma}\|_1:=\sum_{j}(\mathrm{Tr}[\hat{H}_j^2]/d)^{1/2}$.
The corresponding measurement cost,
\begin{equation}\label{eq:FH_samp_cost}
    M \approx \frac{4\|\bm{\gamma}\|_1^2}{\epsilon^2},
\end{equation}
scales quadratically with the $\|\bm{\gamma}\|_1$, which plays the same role as $\|\bm{\beta}\|_1$ in the LCU case.

We can find the similarity between the variances of LCU and FH, (see Eqs.\eqref{eq:asymptotic_LCU_variance} and \eqref{eq:asymptotic_FH_variance}).
Let us write the decompositions for both cases (Eqs.\eqref{eq:LCU} and \eqref{eq:fragmented_hamiltonian}) as 
\begin{equation}
    \hat{H} = \sum_{j=1}^{N_\zeta}\hat{N}_j,    
\end{equation}
where $\hat{N}_{j}\in \{\beta_j \hat{U}_j, \hat{H}_j\}$.
Then, the partial variances, Eqs.\eqref{eq:LCU_partial_variance} (multiplied by $\beta_j$) and \eqref{eq:FH_partial_variance}, are generalized into a single form:
\begin{equation}\label{eq:normal_partial_variance}
\begin{split}
    \mathrm{Var}[X_{0k;j}]=\frac{1}{2}\left(\braket{\phi_0|\hat{N}_j^\dagger \hat{N}_j|\phi_0}+\braket{\phi_k|\hat{N}_j^\dagger \hat{N}_j|\phi_k}\right)\\- \mathbb{E}[X_{0k;j}]^2,
\end{split}
\end{equation}
where $X_{0k;j}\in \{R_{0k;j}, I_{0k;j}\}$ is an estimator for the real or imaginary part of $\braket{\phi_0|\hat{N}_j|\phi_k}$.
%, which corresponds to $\beta_j\braket{\phi_0|\hat{U}_j|\phi_k}$ in LCU and $\braket{\phi_0|\hat{H}_j|\phi_k}$ in FH.

Moreover, the decomposition norms, $\|\bm{\beta}\|_1$ and $\|\bm{\gamma}\|_1$ in Eqs.\eqref{eq:asymptotic_LCU_variance} and \eqref{eq:asymptotic_FH_variance}, are regarded as:
\begin{equation}
    \|\bm{\zeta}\|_1:=\frac{1}{\sqrt{d}}\sum_j \mathrm{Tr}[\hat{N}_j^\dagger \hat{N}_j]^{1/2}.
\end{equation}
Correspondingly, the sampling cost is given by:
\begin{equation}\label{eq:zeta_cost}
    M\approx \frac{4\|\bm{\zeta}\|_1^2}{\epsilon^2}.
\end{equation}
Thus, regardless of the test algorithms, $\|\bm{\zeta}\|_1$ serves as a metric of decomposition assessing the finite sampling error, akin to the measurements of standard expectation values \cite{LCU_Love, Crawford2021efficientquantum}.

% Basic strategy to reduce norm
%HOWTO?

%The gate counts differ by that of controlled $\hat{U}_j$ and $\hat{V}_j$.
% Compare the gate count.

\section{\label{sec:Sampling Cost Reduction}Sampling Cost Reduction}
In this section, we propose techniques to reduce the sampling cost discussed in the previous section.
This is done by adapting the cost reduction techniques for the measurement of standard expectation, $\braket{\phi|\hat{H}|\phi}$, \cite{Crawford2021efficientquantum, Yen2023, Choi2023fluidfermionic} to the measurement of matrix elements.
Such adaptation is simply done by replacing the standard variance, $\braket{\hat{O}_j^2}-\braket{\hat{O}_j}^2$, by the variance for the matrix elements represented in Eqs.\eqref{eq:LCU_partial_variance} and \eqref{eq:FH_partial_variance}.
We propose a method to optimize the decomposition, $\{\hat{U}_j\}$ or $\{\hat{H}_j\}$, to achieve smaller variance analogous to the approach in previous work \cite{Yen2023}.
Furthermore, the dependence of the sampling cost on $\|\bm{\zeta}\|_1$, as shown in Eq.\eqref{eq:zeta_cost}, allows us to use methods that reduce $\|\bm{\zeta}\|_1$.

%Moreover, instead of focusing on state-averaged variances, we can further optimize the state-dependent variances in Eqs.\eqref{eq:LCU_variance} and \eqref{eq:FH_variance} by appropriately splitting the terms that can be shared by different fragments.
%Such methods utilized the covariance between the term and the fragment \cite{Yen2023, Choi2023fluidfermionic}, which is redefined for the case of matrix element in this work.

\subsection{\label{subsec:Shifting Technique}Shifting Technique}
% need to be organized more... Maybe enough now?
Here, we introduce a technique to reduce the norm $\|\bm{\zeta}\|_1$, and consequently, lower expected sampling costs, by shifting the Hamiltonian.
Before developing the technique, let's clarify the notation for the norm, $\|\bm{\zeta}_{A}(\hat{H})\|_1$, which indicates the norm of the decomposition achieved on $\hat{H}$ using a specific deterministic algorithm, $A$.
This clarification is crucial because the decomposition applied to $\hat{H}$ is not unique without specifying the algorithm $A$.
As an example of $A$, a greedy algorithm-based $\texttt{SORTED INSERTION}$ heuristically finds a decomposition by Pauli operators that yields a relatively small norm \cite{Crawford2021efficientquantum}.

The shifting technique involves finding an operator $\hat{T}$ that shifts the Hamiltonian and minimizes the norm of the shifted decomposition:
\begin{equation}\label{eq:min_norm}
    \min_{\bm{\tau}} \|\bm{\zeta}_A(\hat{H} - \hat{T}(\bm{\tau}))\|_1,
\end{equation}
where $A$ is a fixed polynomial time algorithm performing a decomposition, and the Hermitian operator $\hat{T}$ is parameterized by $\bm{\tau}$, enabling the use of classical optimization algorithms.
Additionally, we impose a constraint on $\hat{T}(\bm{\tau})$, that is
\begin{equation}\label{eq:shifting_constraint}
    \hat{T}(\bm{\tau})\ket{\phi_0} = t(\bm{\tau})\ket{\phi_0},
\end{equation}
for a known factor $t(\bm{\tau})\in\mathbb{R}$.
Note that $\hat{T}$ is not required to commute with $\hat{H}$, unlike symmetry operators.
The necessity of this constraint will be presented with the rest of procedure.

After the optimization of Eq.\eqref{eq:min_norm}, we then employ test algorithms explained in Section \ref{sec:Measurement of QKSD Matrix Elements} to estimate the shifted Hamiltonian matrix $\bm{H}-\bm{T}$, consuming reduced cost ($\|\bm{\zeta}_A(\hat{H}-\hat{T})\|_1^2 \le \|\bm{\zeta}_A(\hat{H})\|_1^2$).
Here, $\bm{H}-\bm{T}$ is defined as a Toeplitz matrix satisfying
\begin{equation}\label{eq:shifted_hamiltonian_matrix}
    [\bm{H}-\bm{T}]_{kl} = \braket{\phi_0|(\hat{H}-\hat{T})|\phi_{l-k}}.
\end{equation}
Then, the GEVP with the shifted Hamiltonian matrix is 
\begin{equation}\label{eq:shifted_gevp}
    (\bm{H}-\bm{T})\bm{w}=\bm{S}\bm{w}(E^{(n)}-t),
\end{equation}
because the original matrix element is written in terms of the shifted matrix element as shown below:
\begin{equation*}
\begin{split}
    \bm{H}_{0k} &= \braket{\phi_0|\hat{H}|\phi_k}\\
                                  &=\braket{\phi_0|\hat{H}-(\hat{T}-t)|\phi_k} \\
                                  &=\braket{\phi_0|(\hat{H}-\hat{T})|\phi_k} + t\braket{\phi_0|\phi_k}\\
                                  &=[\bm{H}-\bm{T}]_{0k} + t\bm{S}_{0k},
\end{split}
\end{equation*}
and $\bm{H}-\bm{T}$ is Toeplitz as defined in Eq.\eqref{eq:shifted_hamiltonian_matrix}.
Thus, the constraints of Eqs.\eqref{eq:shifting_constraint} and \eqref{eq:shifted_hamiltonian_matrix} enables the recovery of the solutions of the original GEVP, $E^{(n)}$ from that of shifted one by simply adding $t$.

Here, we give an example of designing and parameterizing $\hat{T}$ dedicated to the electronic structure problem.
In many cases, the reference state, $\ket{\phi_0}$ is chosen as a simple state, such as Hartree Fock (HF) ground state or a configuration state function (CSF) which is a symmetry-adapted state composed of a small number of Slater determinants.
Some or all orbitals in such reference states are separately occupied or unoccupied, which is represented as:
\begin{equation}\label{eq:reference_state}
    \ket{\phi_0} = \left(\bigotimes_{q \in \mathrm{occ}}\ket{1}_q \bigotimes_{q \in \mathrm{virt}}\ket{0}_q\right)\otimes \ket{\phi_0^{(r)}},
\end{equation}
where `$\mathrm{occ}$' and `$\mathrm{virt}$' denote the sets of occupied and virtual orbitals, respectively, and $\ket{\phi_0^{(r)}}$ is possibly entangled and in the remainder system, satisfying $\ket{\phi_0^{(r)}}\bra{\phi_0^{(r)}}=\mathrm{Tr}_{\mathrm{occ},\mathrm{virt}}[\ket{\phi_0}\bra{\phi_0}]$.
In a HF ground state, all orbitals are either occupied or unoccupied while a CSF state may involve some entangled state $\ket{\phi_0^{(r)}}$.
Furthermore, an electronic structure Hamiltonian is represented as
\begin{equation}\label{eq:electronic_structure_hamiltonian}
    \hat{H} = \sum_{p\le q}^{N_{\mathrm{orb}}} h_{pq}\hat{E}_{pq} + \sum_{pqrs}g_{pqrs}(\hat{E}_{pq}\hat{E}_{rs}+\mathrm{h.c.}),
\end{equation}
where $N_{\mathrm{orb}}$ denotes the number of total orbitals and the notations of the symmetric excitation operators $\hat{E}_{rs}=\hat{a}^{\dagger}_r\hat{a}_s +\hat{a}^{\dagger}_s\hat{a}_r$ and the number operators $\hat{n}_q = \hat{a}^{\dagger}_q\hat{a}_q$ are employed.

Then, the shift operator can be designed to cover one- and two-body terms in the Hamiltonian and to satisfy Eq.\eqref{eq:shifting_constraint}, which has the following form:
\begin{equation}\label{eq:shifting_operator_electronic_structure}
\begin{split}
    \hat{T}&(\bm{\tau}^{(1)},\bm{\tau}^{(2)})=\\
    &\sum_{q\in \mathcal{F}}\left(\tau_{q}^{(1)}\hat{n}_q + \sum_{rs\in \mathcal{E}_q}\tau_{qrs}^{(2)}\hat{E}_{rs}(\hat{n}_q - \delta_{q\in \mathrm{occ}})\right).
\end{split}
\end{equation}
Here, the sets of orbital indices, $\mathcal{F}:=\mathrm{occ}\cup \mathrm{virt}$ and $\mathcal{E}_q:=\{(r,s) : r,s\in [N_\mathrm{orb}]\setminus \{q\}, r\le s \}$ are defined, and $\delta_{q\in \mathrm{occ}}$ is an identity if $q \in \mathrm{occ}$, zero otherwise.
The indices $r$ and $s$ range over the entire orbital set except $q$ to make $\hat{T}$ Hermitian and to avoid duplication with the one-body number operator, since $\hat{n}_q^2=\hat{n}_q$.
Note that the two-body terms with $\hat{E}_{rs}\hat{n}_q$ for $q\in\mathrm{virt}$ annihilate $\ket{\phi_0}$, as do $\hat{E}_{rs}(\hat{n}_q - 1)$ for $q\in\mathrm{occ}$.
Therefore, the corresponding shift factor is determined as
\begin{equation}
    t = \sum_{q\in \mathrm{occ}}\tau_q^{(1)}.
\end{equation}

The optimal $\bm{\tau}$ can be found using iterative optimization algorithms like $\mathtt{Powell}$ or $\mathtt{BFGS}$ with the number of parameters of $|\bm{\tau}|=O(N_{\mathrm{orb}}^3)$.
However, the optimization overhead is reduced if we adopt a decomposition algorithm where each term in Eq.\eqref{eq:electronic_structure_hamiltonian} is regarded as a fragment, as detailed in Appendix \ref{sec:appendix_efficient_shifting}.
By using this reduced optimization, the optimal parameters are found as
\begin{equation}\label{eq:shifting_operator_predetermine}
\begin{split}
    \tau_q^{(1)}=&2h_{qq},\\
    \tau_{qrs}^{(2)}=&4g_{rsqq}-2g_{rqsq}.
\end{split}
\end{equation}
After this optimization, the entire number operators, along with a significant portion of one-body Hamiltonian, are discarded.
Consequently, only a part of the two-body Hamiltonian needs to be measured.

The shifting technique is also applicable to other algorithms that require the measurement of $a_{\hat{O}}(\tau):=\braket{\phi_0|\hat{O}e^{-i\hat{H}\tau}|\phi_0}$ for some Hermitian operator $\hat{O}$.
With the extended swap test, one can efficiently estimate $a_{\hat{O}}(\tau)$ from the measurement of $a_{\hat{O}-\hat{T}}(\tau)$ and $a_{\hat{\mathbb{I}}}(\tau)$, which are always measurable simultaneously.

\subsection{\label{subsec:Iterative Coefficient Splitting}Iterative Coefficient Splitting}
In contrast to the shifting technique, which minimizes the state-averaged costs (Eqs.\eqref{eq:asymptotic_LCU_variance} and \eqref{eq:asymptotic_FH_variance}), this section focuses on optimizing the state-dependent costs (Eqs.\eqref{eq:LCU_variance} and \eqref{eq:FH_variance}).
As one approach, we apply iterative coefficient splitting (ICS), initially designed for standard measurements \cite{Yen2023}, to the problem of measuring the matrix elements.

Given a qubit Hamiltonian, $\hat{H}=\sum_p \alpha_p \hat{P}_p$, where $\alpha_p\in\mathbb{R}$ and $\hat{P}_p\in\{\hat{\mathbb{I}}, \hat{\sigma}_x, \hat{\sigma}_y, \hat{\sigma}_z\}^{\otimes N_q}$, ICS seeks a decomposition into measurable operators that minimizes the total variance.
Note that measurable operators here involve not only Hermitian operators but also scaled unitaries, which are used for the Hadamard test.
As reviewed in Appendix \ref{sec:appendix_pauli_grouping}, such a decomposition in Pauli basis is written as:
\begin{equation}
    \hat{H}=\sum_{j=1}^{N_\zeta}\hat{N}_j = \sum_{j=1}^{N_\zeta}\sum_{p\in \mathcal{G}_j}\alpha^{(\mathcal{G}_j)}_p \hat{P}_p,
\end{equation}
where $\hat{N}_j$ are measurable operators and the corresponding sets of Pauli indices, $\mathcal{G}_j$, are predetermined by a decomposition algorithm like $\texttt{SORTED INSERTION}$.
Importantly, these $\mathcal{G}_j$ sets may not be disjoint, meaning the same operator $\hat{P}_p$ can belong to multiple sets.
Correspondingly, the coefficient $\alpha_p$ is split across these sets, satisfying the condition:
\begin{equation}\label{eq:ICS_CONST}
    \sum_{j:p\in \mathcal{G}_{j}}\alpha_p^{(\mathcal{G}_j)} = \alpha_p \quad \forall p.
\end{equation}

ICS leverages the flexibility of coefficient splitting to minimize the total variance, which is thus treated as a function of the split coefficients $\bm{\alpha}$ and the shot allocation $\bm{m}$:
\begin{equation}\label{eq:ICS_VAR}
\begin{split}
    \mathrm{Var}_{(\mathrm{ICS})}&(\bm{\alpha}, \bm{m})=\\
    &\frac{1}{M}\sum_{\substack{j=1,\cdots,N_\zeta \\ X\in \{R, I\}}}\frac{\mathrm{Var}[X_{0k;j};\bm{\alpha}^{(\mathcal{G}_j)}]}{m_{jX}},
\end{split}
\end{equation}
where the vectors of split coefficients are defined as $\bm{\alpha}:=(\bm{\alpha}^{(\mathcal{G}_1)}, \cdots, \bm{\alpha}^{(\mathcal{G}_{N_{\zeta}})})$ and $\bm{\alpha}^{(\mathcal{G}_j)}:=\{\alpha_{p}^{(\mathcal{G}_j)}:p\in\mathcal{G}_j\}$.
The partial variance for $\hat{N}_j$ (Eq.\eqref{eq:normal_partial_variance}) is expressed as a quadratic form in terms of the split coefficients $\bm{\alpha}^{(\mathcal{G}_j)}$:
\begin{equation}\label{eq:partial_variance_with_pauli_cov}
    \mathrm{Var}[X_{0k;j};\bm{\alpha}^{(\mathcal{G}_j)}]=\sum_{p, q\in \mathcal{G}_j}\alpha_{p}^{(\mathcal{G}_j)}\alpha_{q}^{(\mathcal{G}_j)}\mathrm{Cov}^{(X)}(\hat{P}_p, \hat{P}_q)_{\Phi_{0k}}.
\end{equation}
Here, the Pauli covariance for the real part is determined as:
\begin{equation}\label{eq:pauli_covariance_real}
\begin{split}
    \mathrm{Cov}^{(R)}(\hat{P}_p, \hat{P}_q)_{\Phi_{0k}} &= \braket{\Phi_{0k}|\hat{\mathbb{I}}\otimes\frac{1}{2}\{\hat{P}_p, \hat{P}_q\}|\Phi_{0k}}\\
    &-\mathrm{Re}[\braket{\phi_0|\hat{P}_p|\phi_k}]\mathrm{Re}[\braket{\phi_0|\hat{P}_q|\phi_k}].
\end{split}
\end{equation}
The detailed derivation is provided in Appendix \ref{sec:appendix_pauli_covariance}.
The covariance for the imaginary part is obtained by replacing $\mathrm{Re}$ with $\mathrm{Im}$ in Eq.\eqref{eq:pauli_covariance_real}.

To proceed with the optimization of Eq.\eqref{eq:ICS_VAR}, the covariances need to be estimated beforehand.
A direct and precise calculation of Eq.\eqref{eq:pauli_covariance_real} requires a state, $\ket{\phi_k}=e^{-i\hat{H}t_k}\ket{\phi_0}$, which is classically difficult to obtain.
Therefore, the covariance is approximated using configuration interaction single and double (CISD) ground state, $\ket{\mathrm{CISD}}$, and its energy, $E_{\mathrm{CISD}}$: 
\begin{equation}\label{eq:Phased_CISD_proxy}
\ket{\phi_k}\approx e^{-iE_{\mathrm{CISD}}t_k}\ket{\mathrm{CISD}}.
\end{equation}

Then, the optimization problem is given as:
\begin{equation}\label{eq:ICS_OPT}
    (\bm{\alpha}^{\star}, \bm{m}^{\star}) = \argmin_{\substack{\bm{\alpha}:\alpha_p=\sum_{j:p\in \mathcal{G}_j}\alpha_{p}^{(\mathcal{G}_j)} \,\,\forall p\\ \bm{m}:\sum_{jX}m_{jX}=1}}\mathrm{Var}_{(\mathrm{ICS})}(\bm{\alpha}, \bm{m}).
\end{equation}
Here, $\bm{\alpha}$ and $\bm{m}$, denote the split coefficients and the shot allocation, respectively.
Although optimizing both $\bm{\alpha}$ and $\bm{m}$ does not have a closed-form solution, the each step of alternating optimization\textemdash by fixing one variable ($\bm{\alpha}$ or $\bm{m}$) while optimizing the other\textemdash is a convex problem.
When $\bm{m}$ is held constant, constrained quadratic programming can be employed for optimizing $\bm{\alpha}$, because the variance is expressed as a quadratic form of $\bm{\alpha}$ as shown in Eq.\eqref{eq:partial_variance_with_pauli_cov}.
Conversely, when optimizing $\bm{m}$ with a fixed $\bm{\alpha}$, the Lagrangian multiplier method is utilized, which results in 
\begin{equation}\label{eq:opt_shot_alloc_ics}
    m_{jX}^{(\mathrm{opt})}\propto \mathrm{Var}[X_{0k;j};\bm{\alpha}^{(\mathcal{G}_j)}]^{1/2}.
\end{equation}
% In this sense, ICS method transforms the optimization of Eq.\eqref{eq:ICS_OPT} into an alternating optimization on $\bm{\alpha}$ and $\bm{m}$.

Overall, adapting the ICS method to the measurement problem for the matrix elements involves three additional key features compared to Ref.\cite{Yen2023}:
1) including scaled unitaries as measurable objects,
2) defining covariances between anticommuting Pauli operators,
3) employing the CISD proxy state for the time-evolved state.

\section{\label{sec:numerical_analysis}Numerical Results}

Here, we present numerical illustrations of our theoretical developments by examining the electronic structures of small molecules: $\ce{H2}$, $\ce{H4}$, $\ce{LiH}$, $\ce{BeH2}$ and $\ce{H2O}$, using the STO-3G basis set. The fermionic Hamiltonians are transformed to qubit operators by the Bravyi-Kitaev mapping with two-qubit reduction \cite{bravyi_kitaev, qubit_tapering}.

For the QKSD setting, we use the Hartree-Fock ground as the reference state $\ket{\phi_0}$.
The time step for the propagator is chosen as $\Delta_t=\pi/\Delta E_1$, following the choice in \cite[Theorem~3.1]{Theory_QKSD}, where $\Delta E_1$ represents the first spectral gap.
In practical scenario where the spectral gap is difficult to estimate in advance, a sufficiently large time step is often chosen to mitigate ill-conditioning, although this comes at the cost of increased circuit depth. The dependency of the conditioning on the time step has been numerically studied in \cite[Appendix A-2]{PRXQuantum.2.010333}.
Also, in order to focus on the finite sampling error, we assume that the exact propagator $\hat{B}=e^{-i\hat{H}\Delta_t}$ is available, which does not involve Trotterization error.
The QKSD order is set to $n=N_q+1$, where $N_q$ is the number of qubits.
In this setting, the overhead for the classical GEVP is exponentially small compare to the direct diagonalization, while the error induced by the projection onto the quantum Krylov space is exponentially small as shown in \cite[Theorem~3.1]{Theory_QKSD}.
%In this setting, the overhead for the classical GEVP solver is exponentially small to the dimension of the Hilbert space while maintaining the noiseless QKSD error in the ground state energy small enough \textbf{\color{red}[AFI: I could not understand this]} \textbf{[\color{blue}GL: $n\approx N_q$ makes the Krylov space exponentially smaller than the full Hilbert space. Also, the deviations of unperturbed QKSD solution from FCI were observed as small. I will rephrase this to be more understandable]}.
We observed that the error caused by the Krylov projection is bounded as $|E_0^{(n)}-E_0|<10^{-4}\mathrm{~Ha}$ in the electronic structures of our interest, where $E_0$ is the true ground state energy of $\hat{H}$.
%{\color{red}\st{We assumed the exact propagator $\hat{B}=e^{-i\hat{H}\Delta_t}$, which does not involve Trotterization error.}}\textbf{\color{red}AFI: The connection between the last 2 sentences is not clear.}\textbf{\color{blue}GL: Yes, they are unrelated. I move this sentence in front.}

\begin{table}
    \centering
    \begin{tabular}{c|r|r|r|r|r}
    \multicolumn{1}{c|}{Norm (Hartree)} & \multicolumn{1}{c|}{$\ce{H2}$} & \multicolumn{1}{c|}{$\ce{H4}$} & \multicolumn{1}{c|}{$\ce{LiH}$} & \multicolumn{1}{c|}{$\ce{BeH2}$} & \multicolumn{1}{c}{$\ce{H2O}$} \\
    \Xhline{2\arrayrulewidth}
    $\|\bm{\beta}_{\texttt{SI}}(\hat{H})\|_1$          & 0.8405    & 6.0055    & 9.9902     & 16.4482     & 57.3794    \\
    $\|\bm{\beta}_{\texttt{SI}}(\hat{H}-\hat{T})\|_1$  & 0.1812    & 1.1278    & 0.4739     & 1.3582      & 2.0035     \\
    \hline
    Reduction(LCU, \%)                                 & 78.4      & 81.2      & 95.3       & 91.7        & 96.5     \\
    \Xhline{2\arrayrulewidth}
    $\|\bm{\gamma}_{\texttt{SI}}(\hat{H})\|_1$         & 0.7397    & 2.0310    & 2.5254     & 4.7003      & 21.9723    \\
    $\|\bm{\gamma}_{\texttt{SI}}(\hat{H}-\hat{T})\|_1$ & 0.1812    & 0.5288    & 0.3268     & 0.7857      & 1.1727     \\
    \hline
    Reduction(FH, \%)                                  & 75.50      & 74.0     & 87.1       & 83.3        & 94.7     \\
    \Xhline{2\arrayrulewidth}
    \end{tabular}
    \caption{
        LCU($\bm{\beta}$) and FH($\bm{\gamma}$) decomposition norms with and without shifting.
        \texttt{SORTED INSERTION} is adopted as decomposition algorithm.
        Shifting operators $\hat{T}$ are chosen as Eq.\eqref{eq:shifting_operator_electronic_structure} and optimized by the \texttt{POWELL} algorithm after assigning the parameters shown in Eq.\eqref{eq:shifting_operator_predetermine}.
    }
    \label{tab:norm_reduction}
\end{table}

\begin{table*}
    \centering
    \begin{tabular}{c|c|rr|rr|rr|rr|rr}
    \multicolumn{2}{c|}{$M\epsilon^2$ (Hartree$^2$)} & \multicolumn{2}{c|}{$\ce{H2}$} & \multicolumn{2}{c|}{$\ce{H4}$} & \multicolumn{2}{c|}{$\ce{LiH}$} & \multicolumn{2}{c|}{$\ce{BeH2}$} & \multicolumn{2}{c}{$\ce{H2O}$} \\
    \Xhline{2\arrayrulewidth}
    \multirow{5}{*}{LCU} & SI & 1.51 &(1.50) & 81.40 &(82.53) & 213.60 &(215.62) & 598.29 &(600.75) & 7265.04 &(7290.16) \\ \cline{2-12}
                        & ICS(True) & 0.88 &(0.87) & 69.84 &(69.77) & 185.04 &(182.27) & 534.46 &(534.83) & 6534.53 &(6508.27)\\ \cline{2-12}
                        & ICS(CISD) & 0.92 &(0.90) & 70.11 &(71.19) & 185.34 &(184.59) & 536.47 &(529.47) & 6550.79 &(6561.18)\\ \cline{2-12}
                        & Shift & 0.13 &(0.13) & 5.08 &(5.01) & 0.89 &(0.91) & 7.37 &(7.38) & 16.04 &(15.77) \\ \cline{2-12}
                        & Shift, ICS & 0.13 &(0.13) & 4.82 &(4.85) & 0.80 &(0.79) & 6.97 &(7.03) & 14.53 &(14.50) \\
    \Xhline{2\arrayrulewidth}
    \multirow{5}{*}{FH} & SI & 2.18 &(2.24) & 34.66 &(34.27) & 50.19 &(50.25) & 151.65 &(149.91) & 2284.67 &(2287.81) \\ \cline{2-12}
                        & ICS(True) & 1.29 &(1.29) & 18.32 &(18.32) & 26.43 &(26.29) & 88.00 &(87.51) & 1528.96 &(1529.87) \\ \cline{2-12}
                        & ICS(CISD) & 1.42 &(1.44) & 18.74 &(18.60) & 26.48 &(26.18) & 89.52 &(89.90) & 1535.81 &(1523.23) \\ \cline{2-12}
                        & Shift & 0.13 &(0.13) & 1.61 &(1.60) & 0.67 &(0.67) & 3.01 &(3.01) & 6.44 &(6.44) \\ \cline{2-12}
                        & Shift, ICS & 0.13 &(0.13) & 0.80 &(0.79) & 0.37 &(0.36) & 1.46 &(1.47) & 3.25 &(3.25) \\
    \Xhline{2\arrayrulewidth}
    \end{tabular}
    \caption{
        The measurement costs $M\epsilon^2$ for the (shifted) matrix elements $\braket{\phi_0|\hat{H}(-\hat{T})|\phi_k}$, averaged over $k$, are separately displayed for the cases with and without applying ICS and/or shift techniques from the measurement setting obtained from \texttt{SORTED INSERTION} (denoted as `SI').
        The costs are estimated by directly computing Eqs.\eqref{eq:FH_variance} and \eqref{eq:LCU_variance}.
        The values in the parenthesis are averaged empirical variances obtained by 1,000 independent runs of QKSD algorithm for each setting.
        The sub-optimal shot allocations are used if ICS is not employed, while shot allocations of the ICS output are adopted otherwise.
        The result of ICS based on the true state $\ket{\phi_k}$ and phased CISD proxy are shown.
        Note that ICS by true state is not practically achievable.
    }
    \label{tab:measurement_cost}
\end{table*}

The reduction in the norm achieved through the shifting operator in Eq.\eqref{eq:shifting_operator_electronic_structure} is presented in Table \ref{tab:norm_reduction}.
Overall, the shifting technique reduced the norm more than 74\%.
The relative reductions are larger in the LCU case because the shifting removes large $\hat{Z}$-type Pauli operators.
These operators cannot be grouped together in the LCU decomposition, leading to a larger norm when they were not removed.
The details are provided in Appendix \ref{sec:appendix_pauli_grouping}.
However, the resulting costs of LCU remain higher than those for FH, which implies that FH allows more efficient measurement.

The exact and empirical measurement costs for each scenario, both with and without the techniques described in Section \ref{sec:Sampling Cost Reduction}, are tabulated in Table \ref{tab:measurement_cost}.
The measurement costs obtained by the experiments approximate the exact costs described in Eqs. \eqref{eq:LCU_variance} and \eqref{eq:FH_variance} within the error caused by the finite number of experiments.
Generally, the cost reduction tends to increase with the system size.
Also, a significant portion of the reduction is attributed to the shifting technique, which correlates closely with the squared reduction ratio in Table \ref{tab:norm_reduction}.
This correlation suggests that $M\epsilon^2 \propto \|\bm{\zeta}\|_1^2$, aligning with the relationship previously established in Eq.\eqref{eq:zeta_cost}.

Furthermore, we validated the approximation of Pauli covariance (Eq.\eqref{eq:pauli_covariance_real}) using the CISD proxy (Eq.\eqref{eq:Phased_CISD_proxy}) by comparing ICS results obtained with both the proxy and the true Krylov basis state, as shown in Table \ref{tab:measurement_cost}.

\begin{figure}
    \centering
    \includegraphics[width=0.95\linewidth]{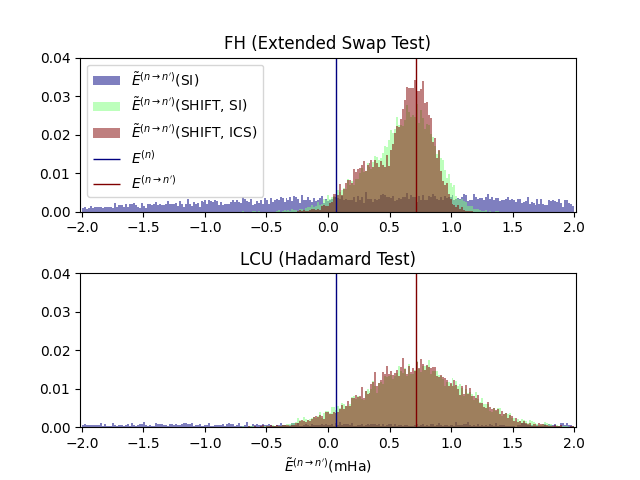}
    \caption{Histogram of the error of the estimated ground state energies ($\tilde{E}^{(n\rightarrow n')}_0-E_{\mathrm{FCI}}$) of the electronic structure Hamiltonian of $\ce{H2O}$ with different measurement settings obtained from QKSD algorithm with the thresholding by Eq. \eqref{eq:optimal_truncation} and the finite number of shots of $M=10^8$.
    The horizontal axis represents the errors in the atomic unit (mHa), and the vertical one denotes the frequency of the each histogram bin.
    The histogram is plotted using the perturbed QKSD energies from $10,000$ independent and identical experiments.
    Here, the optimal thresholding of Eq.\eqref{eq:optimal_truncation} is applied to mitigate the sampling error, further reducing the Krylov dimension from $n=9$ to $n'=3$.
    In the FH case, two different values of $n'=2,3$ were observed across the random experiments, resulting in the two peaks in the histogram.
    $\tilde{E}^{(n\rightarrow n')}$ and ${E}^{(n\rightarrow n')}$ denote the QKSD ground state energies with and without considering the effect of sampling error}, respectively.
    $E^{(n)}$ represents the QKSD ground state energy without error or thresholding.
    \label{fig:QKSD_histogram}
\end{figure}

However, we observed that for the LCU cases, the ICS method performed less effectively than the FH cases.
As shown in Appendix \ref{sec:appendix_pauli_grouping}, a scaled unitary fragment, $\beta_j\hat{U}_j$, is constructed by grouping mutually anticommuting Pauli operators, while commuting ones form a Hamiltonian fragment, $\hat{H}_j$.
In general, because the anticommutation between Pauli operators occurs less frequently, there are fewer opportunities for the Pauli operators to be grouped to form a unitary.
This makes the coefficient splitting with LCU less effective.
%The optimization of Eq.\eqref{eq:independent_ICS} over $\bm{\alpha}$ with the constraint of Eq.\eqref{eq:ICS_CONST} and fixed $\bm{m}$ results in concentrated $\alpha_p^{\mathcal{G}_{j}}=\delta_{j,j^{(\mathrm{opt})}}\alpha_p$

%Consequently, this renders the continuous coefficient separation less effective compared to the FH cases.
As shown in Fig.\ref{fig:QKSD_histogram}, we compare the perturbed ground state energies of the electronic Hamiltonian of $\ce{H2O}$, obtained by the QKSD algorithm, with and without applying the reduction techniques.
We observed the unperturbed QKSD energy, $E^{(n)}$, is close to the full configuration interaction (FCI) energy ($|E^{(n)} - E_{\mathrm{FCI}}|\approx 0.1 \mathrm{mHa}$).

We also employed a classical postprocessing called basis thresholding to alleviate the numerical instability of GEVP \cite{Lee2024samplingerror, Theory_QKSD}.
Since small singular values in $\bm{S}$ significantly amplify the perturbation to the eigenvalue $E^{(n)}$, the thresholding technique further projects the GEVP onto the singular basis of $\bm{S}$ with corresponding singular values larger than a certain value $\epsilon_{\mathrm{th}}>0$.
However, the thresholding also discards the information about the eigenstates, not only the error, which biases thresholded QKSD energy.
Thus, by adjusting $\epsilon_{\mathrm{th}}$, the thresholding establishes a trade-off, reducing the effect of the statistical error in $\bm{H}$ and $\bm{S}$ while introducing additional projection error.
Within this trade-off, the optimal $\epsilon_{\mathrm{th}}$ was heuristically found in Ref.\cite{Lee2024samplingerror} to be: 
\begin{equation}\label{eq:optimal_truncation}
\epsilon_{\mathrm{th}}=\tilde{O}(n/\sqrt{M_{\bm{S}}}),    
\end{equation}
where $M_{\bm{S}}$ is the number of shots used to construct $\bm{S}$ with the Hadamard or extended swap test.
Note that for the case of errors other than the sampling error are present, an automated thresholding \cite{yoshioka2024diagonalization} can be adopted.
We denote $n'$ as the dimension of the thresholded problem and $E^{(n\rightarrow n')}$ as corresponding eigenvalue.
%However, the direct solution of the perturbed and ill-conditioned GEVP contains a significant deviation.
%In order to minimize the perturbation effect caused by the sampling error, we adopted the basis thresholding technique, where singular bases of $\bm{S}$ with the corresponding singular values less than $\epsilon=\tilde{O}(n/\sqrt{M})$ are removed \cite{lee2023sampling}.
%

Despite the bias caused by the projection error, the effect of the matrix perturbation is minimized.
In the FH case, the application of the shifting technique and ICS resulted in the perturbed QKSD solution being concentrated within chemical accuracy ($|\tilde{E}_0^{(n\rightarrow n')}-E_{\mathrm{FCI}}| < 1.6\mathrm{mHa}$), whereas most results without these techniques deviated beyond chemical accuracy.

% However, the errors are still outside of the chemical accuracy ($1.6\mathrm{mHa}$), due to the ill-conditioning of QKSD method even with $M=10^8$ shots.
%This motivates us to further develop algorithmic strategies to mitigate the numerical instability, which may involve finding QKSD bases with smaller condition numbers.
% {\color{red}\textbf{To conclusion}\sout{Additionally, the techniques for reducing sampling error also can be applied to other algorithms that requires measuring the transitional amplitudes, such as the quantum Zeno Monte Carlo algorithm \cite{han2024quantum}, which estimates properties of the ground state.}}
%For FH case, applying both techniques made the entire distributions within the chemical accuracy ($|\tilde{E}^{(n\rightarrow n')} - E_{\mathrm{FCI}}|< 10 \mathrm{mHa}$).

\section{\label{sec:conclusion}Conclusion}
In this work, we analyzed the finite sampling errors that arise when projecting the Hamiltonian onto the quantum Krylov subspace with quantum algorithms.
The measurement cost analyses of two scenarios, LCU and FH decomposition, converge to a unified perspective, where the decomposition rules and circuit construction are different.
We also showed that the expected cost is analogous to the LCU 1-norm, which has the same definition in the context of block encoding \cite{LCU_childs}.
Such analogies enable the translation of methods originally motivated to reduce the costs in LCU simulation \cite{Loaiza_BLISS, Loaiza_2023} and the expectation measurement \cite{Crawford2021efficientquantum, Yen2023} to the problem of measuring matrix elements.
Especially, adopting the symmetry shift \cite{Loaiza_BLISS, Loaiza_2023} to the measurement problem eases the constraint on the shift operator, and thus provides larger cost reduction.
Although the shifting technique is more effective in the LCU case, the measurement cost is observed to be lower in the FH case.

Despite of the effort to reduce the measurement cost, achieving the chemical accuracy by QKSD in practice remains still challenging, as the corresponding GEVP is often ill-conditioned.
Note that in our application of QKSD to the $\ce{H2O}$ system, the results were fitted within the chemical accuracy by using $10^8$ shots, which can be considered expensive.
In classical Krylov subspace diagonalization, the perturbation on the GEVP matrices mainly depends on the round off or floating point error, which decreases exponentially to the number of bits of the data type, although the calculation of the matrix elements takes exponentially long time.
On the other hand, governed by the standard quantum limit, the matrix perturbation in QKSD decays with the square root of the number of measurements, which is much slower than the classical KSD, while each measurement takes polynomial time.
Therefore, if exponential precision for the matrix element is required because of the ill-conditioning, it is not yet obvious that QKSD is superior to the classical counterpart in terms of running time to achieve a certain precision in the estimated eigenvalues.

% In practical applications, we observed a clear superiority on FH decomposition compare to LCU in terms of measurement cost.
As discussed in Section \ref{sec:Measurement of QKSD Matrix Elements}, we observed that the extended swap test, enabled by FH decomposition, allows for the simultaneous measurement of both the overlap and Hamiltonian matrices:
\begin{gather*}
    \bm{S}_{0k}=\braket{\Phi_{0k}|\hat{\mathbb{I}}\otimes(\hat{\sigma}_x+i\hat{\sigma}_y)|\Phi_{0k}},\\
    \bm{H}_{0k}=\braket{\Phi_{0k}|\hat{H}\otimes(\hat{\sigma}_x+i\hat{\sigma}_y)|\Phi_{0k}},
\end{gather*}
due to commutativity between $\hat{\mathbb{I}}\otimes\hat{\sigma}_{x(y)}$ and $\hat{H}\otimes\hat{\sigma}_{x(y)}$.
This idea can be extended to the other EFTQC algorithms.
Based on our knowledge, recently developed EFTQC algorithms other than QKSD focus on extracting the spectrum only from the autocorrelation function, $a(t):=\braket{\phi_0|e^{-i\hat{H}t}|\phi_0}$.
However, those algorithms can be more refined by adopting the simultaneous measurement of $a_{\hat{O}}(t):=\braket{\phi_0|\hat{O}e^{-i\hat{H}t}|\phi_0}$ for some observable $\hat{O}$.
For instance, the first derivative of $a(t)$ can be directly calculated from $a_{\hat{H}}(t)$.
This measurement, which can be done precisely if the techniques introduced in this work are adopted, only requires overheads of $O(N_q^2)$ Clifford operations.
These operations do not need additional logical qubits and present an endurable cost for EFTQCs.
Given this rationale, our future research direction involves exploring EFTQC algorithms including $a_{\hat{O}}(t)$ in the spectrum extraction.

\section*{Data Availability Statement}
    The Python package for the techniques introduced in this work (shifting and iterative coefficient splitting) is available at \cite{ofex}.
    Furthermore, the raw data for numerical results are available at \cite{reproduce} along with the reproducing scripts.

\section*{Conflicts of Interest}
    There are no conflicts of interest to declare.

\section*{Author Contributions}

    GL: Conceptualization, Methodology, Software, Validation, Formal Analysis, Investigation, Data Curation, Writing – Original Draft Preparation,  Writing – Review \& Editing, Visualization, Funding Acquisition.
    SC: Methodology, Software.
    JH: Conceptualization, Investigation, Resources, Writing – Review \& Editing, Supervision, Project Administration, Funding Acquisition.
    AFI: Conceptualization, Methodology, Formal Analysis, Investigation, Resources, Writing – Review \& Editing, Visualization, Supervision, Project Administration, Funding Acquisition.

\begin{acknowledgments}
A.F.I. acknowledges financial support from
the Natural Sciences and Engineering Council of Canada
(NSERC).

G.L. acknowledges the support from the Education and Training Program of the Quantum Information Research Support Center, funded through the National Research Foundation of Korea (NRF) by the Ministry of Science and ICT (MSIT) of Korean government (NRF-2021M3H3A1036573).
Additionally, this work was supported by Basic Science Research Program through the National Research Foundation of Korea (NRF), funded by the Ministry of Education, Science and Technology (NRF-2022M3H3A106307411, NRF-2023M3K5A1094813).
This work was also partly supported by Institute for Information \& communications Technology Promotion (IITP) grant funded by the Korea government(MSIP) (No. 2019-0-00003, Research and Development of Core technologies for Programming, Running, Implementing and Validating of Fault-Tolerant Quantum Computing System). 
The Ministry of Trade, Industry, and Energy (MOTIE), Korea, also partly supported this research under the Industrial Innovation Infrastructure Development Project (No. RS-2024-00466693).
\end{acknowledgments}
    
\appendix

\section{Averaged Variance on $\bm{H}_{0k}$}\label{sec:appendix_asymptotic_proof}
In this section, we derive the state-independent variance, represented by Eqs.\eqref{eq:asymptotic_LCU_variance} and \eqref{eq:asymptotic_FH_variance}, by averaging the states $\ket{\phi_0}$ and $\ket{\phi_k}$ over the independent uniform Haar distributions $(\phi_0\sim \mathcal{H}(d), \phi_k \sim \mathcal{H}(d))$.
Before proceeding with the derivation, we introduce three lemmas that will be instrumental in this process.
Throughout this section, $\mathbb{E}_{\phi\sim \mathcal{H}(d)}[\cdot]$ is abbreviated as $\mathbb{E}_{\phi}[\cdot]$ unless otherwise mentioned.
\begin{lemma}\label{lemm:haar_square_independent}
    For any normal operator $\hat{A} \in \mathbb{C}^{d\times d}$, the following equality holds:
    \begin{equation}
        \mathbb{E}_{\phi_1, \phi_2}\left[ \left|\bra{\phi_1} \hat{A} \ket{\phi_2}\right|^2 \right] = \frac{1}{d^2}\mathrm{Tr}[\hat{A}^{\dagger}\hat{A}].
    \end{equation}
\end{lemma}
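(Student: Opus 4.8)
The plan is to compute the double Haar average by expanding $|\braket{\phi_1|\hat{A}|\phi_2}|^2 = \braket{\phi_1|\hat{A}|\phi_2}\overline{\braket{\phi_1|\hat{A}|\phi_2}} = \braket{\phi_1|\hat{A}|\phi_2}\braket{\phi_2|\hat{A}^\dagger|\phi_1}$ and then using the standard first-moment formula for the Haar measure, $\mathbb{E}_{\phi}[\ket{\phi}\bra{\phi}] = \hat{\mathbb{I}}/d$. Since $\phi_1$ and $\phi_2$ are independent, the average factorizes and I can integrate over each variable in turn.

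First I would write $\left|\bra{\phi_1}\hat{A}\ket{\phi_2}\right|^2 = \mathrm{Tr}\!\left[\hat{A}\ket{\phi_2}\bra{\phi_2}\hat{A}^\dagger\ket{\phi_1}\bra{\phi_1}\right]$, moving everything under a trace by cyclicity. Then I would average over $\phi_2$ first: by linearity of the trace and expectation, $\mathbb{E}_{\phi_2}\!\left[\ket{\phi_2}\bra{\phi_2}\right] = \hat{\mathbb{I}}/d$, giving $\frac{1}{d}\,\mathrm{Tr}\!\left[\hat{A}\hat{A}^\dagger\ket{\phi_1}\bra{\phi_1}\right]$. Next I would average over $\phi_1$ the same way, producing $\frac{1}{d^2}\,\mathrm{Tr}\!\left[\hat{A}\hat{A}^\dagger\right]$. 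Finally, since $\hat{A}$ is normal, $\hat{A}\hat{A}^\dagger = \hat{A}^\dagger\hat{A}$, so the result equals $\frac{1}{d^2}\mathrm{Tr}[\hat{A}^\dagger\hat{A}]$, as claimed. (In fact normality is not even needed here because the trace is invariant under the swap $\hat{A}\hat{A}^\dagger \leftrightarrow \hat{A}^\dagger\hat{A}$, but I would keep the hypothesis as stated; it presumably matters for the companion lemmas about real and imaginary parts.)

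The only real ingredient is the identity $\mathbb{E}_{\phi\sim\mathcal{H}(d)}[\ket{\phi}\bra{\phi}] = \hat{\mathbb{I}}/d$, which follows from Schur's lemma: the operator $\mathbb{E}_\phi[\ket{\phi}\bra{\phi}]$ commutes with every unitary (by invariance of the Haar measure), hence is a multiple of the identity, and the constant is fixed by taking the trace, $\mathbb{E}_\phi[\braket{\phi|\phi}] = 1 = (\text{const})\cdot d$. I would either cite this as standard or include the one-line Schur argument. There is no genuine obstacle; the main thing to be careful about is the order of operations — keeping everything inside a single trace so that the two independent averages can be applied cleanly without worrying about operator ordering — and making sure the normalization $\ket{\phi}$ is a unit vector so the trace of the first moment is exactly $1$.
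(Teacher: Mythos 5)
Your proposal is correct and follows essentially the same route as the paper: both expand $|\braket{\phi_1|\hat{A}|\phi_2}|^2$ into a product, apply the first-moment Haar identity $\mathbb{E}_{\phi}[\ket{\phi}\bra{\phi}]=\hat{\mathbb{I}}/d$ (derived via Schur's lemma) to each state in turn, and land on $\mathrm{Tr}[\hat{A}^\dagger\hat{A}]/d^2$. Your side remark that normality is not actually needed here (by cyclicity of the trace) is accurate and consistent with the paper, which also never uses normality in this particular lemma.
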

\begin{proof}
    For any operator $\hat{X}\in \mathbb{C}^{d\times d}$, the averaged conjugation is known as
    \begin{align}
        \mathbb{E}_{\hat{U}\sim \mathcal{U}(d)}\left[\hat{U}^{\dagger} \hat{X} \hat{U}\right] = \frac{\hat{I}}{d} \mathrm{Tr}[\hat{X}],
    \end{align}
    as a consequence of Schur's lemma and the left and right invariance of the Haar measure, where $\mathcal{U}(d)$ is the uniform Haar distribution over the unitary group of dimension $d$.
    Therefore, we can state
    \begin{align}\label{eq:average_state_expectation}
        \mathbb{E}_{\phi}\left[\bra{\phi} \hat{X} \ket{\phi}\right] = \frac{\mathrm{Tr}[\hat{X}]}{d}.
    \end{align}
    Finally, by applying Eq.\eqref{eq:average_state_expectation} consecutively, we have:
    \begin{align*}
    \mathbb{E}_{\phi_1, \phi_2}\left[ \left|\bra{\phi_1} \hat{A} \ket{\phi_2}\right|^2 \right]
    =& \mathbb{E}_{\phi_1, \phi_2}\left[ \bra{\phi_1} \hat{A} \ket{\phi_2}\bra{\phi_2} \hat{A}^{\dagger} \ket{\phi_1} \right]\\
    =& \frac{1}{d}\mathbb{E}_{\phi_2}\left[\mathrm{Tr}[\hat{A}\ket{\phi_2}\bra{\phi_2}\hat{A}^\dagger]\right] \\
    =& \frac{1}{d}\mathbb{E}_{\phi_2}\left[\bra{\phi_2}\hat{A}^\dagger\hat{A}\ket{\phi_2}\right] \\
    =& \frac{1}{d^2} \mathrm{Tr}[\hat{A}^\dagger \hat{A}].
    \end{align*}
\end{proof}

\begin{lemma}\label{lemm:haar_reim_square_independent}
    For any normal operator $\hat{A}\in \mathbb{C}^{d\times d}$, the following equality holds:
    \begin{equation}\label{eq:haar_reim}
        \mathbb{E}_{\phi_1, \phi_2}\left[\mathrm{Re}[\braket{\phi_1|\hat{A}|\phi_2}]^2\right] = \mathbb{E}_{\phi_1, \phi_2}\left[\mathrm{Im}[\braket{\phi_1|\hat{A}|\phi_2}]^2\right].
    \end{equation}
\end{lemma}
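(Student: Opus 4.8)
\textbf{Proof proposal for Lemma \ref{lemm:haar_reim_square_independent}.}

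The plan is to exploit a phase-rotation symmetry of the Haar measure. The key observation is that if $\ket{\phi_1}$ is Haar-distributed, then so is $e^{i\theta}\ket{\phi_1}$ for any fixed $\theta$, since the Haar measure on pure states is invariant under multiplication by the global phase $e^{i\theta}\hat{\mathbb{I}}$ (this is a special unitary). In particular, taking $\theta = \pi/4$ sends $\braket{\phi_1|\hat{A}|\phi_2}$ to $e^{-i\pi/4}\braket{\phi_1|\hat{A}|\phi_2}$, which swaps the roles of the real and imaginary parts up to a sign and a factor: more cleanly, rotating by $\theta$ and averaging shows that $\mathbb{E}[\mathrm{Re}[z]^2]$ and $\mathbb{E}[\mathrm{Im}[z]^2]$ must coincide where $z = \braket{\phi_1|\hat{A}|\phi_2}$.

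Concretely, first I would write $z(\phi_1,\phi_2) = \braket{\phi_1|\hat{A}|\phi_2}$ and note that for any fixed $\theta$, the substitution $\ket{\phi_1}\mapsto e^{i\theta}\ket{\phi_1}$ leaves the distribution of $\phi_1$ unchanged, hence
\begin{equation*}
    \mathbb{E}_{\phi_1,\phi_2}\!\left[f\big(\mathrm{Re}[z],\mathrm{Im}[z]\big)\right] = \mathbb{E}_{\phi_1,\phi_2}\!\left[f\big(\mathrm{Re}[e^{-i\theta}z],\mathrm{Im}[e^{-i\theta}z]\big)\right]
\end{equation*}
for any integrable $f$. Choosing $\theta = \pi/2$ gives $e^{-i\theta}z = -iz$, so $\mathrm{Re}[-iz] = \mathrm{Im}[z]$ and $\mathrm{Im}[-iz] = -\mathrm{Re}[z]$. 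Applying the identity above with $f(x,y) = x^2$ then yields
\begin{equation*}
    \mathbb{E}_{\phi_1,\phi_2}\!\left[\mathrm{Re}[z]^2\right] = \mathbb{E}_{\phi_1,\phi_2}\!\left[\mathrm{Re}[-iz]^2\right] = \mathbb{E}_{\phi_1,\phi_2}\!\left[\mathrm{Im}[z]^2\right],
\end{equation*}
which is exactly Eq.\eqref{eq:haar_reim}. Note this argument never uses normality of $\hat{A}$ — that hypothesis is presumably carried over from the statement of Lemma \ref{lemm:haar_square_independent} for uniformity but is not needed here; alternatively, one can combine with Lemma \ref{lemm:haar_square_independent}, since $\mathbb{E}[\mathrm{Re}[z]^2] + \mathbb{E}[\mathrm{Im}[z]^2] = \mathbb{E}[|z|^2] = \mathrm{Tr}[\hat{A}^\dagger\hat{A}]/d^2$ fixes the common value to be half of that.

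I do not anticipate a serious obstacle here; the only subtlety is justifying the phase-invariance of the state ensemble rigorously — i.e. that parametrizing pure states as $\hat{U}\ket{0}$ with $\hat{U}\sim\mathcal{U}(d)$ and left-invariance of the Haar measure lets us absorb the diagonal phase $e^{i\theta}\hat{\mathbb{I}}$ — but this is the same invariance already invoked in the proof of Lemma \ref{lemm:haar_square_independent}, so it can be cited rather than re-derived. An alternative, slightly more computational route would be to expand $\mathrm{Re}[z]^2 = \tfrac14(z+\bar z)^2$ and $\mathrm{Im}[z]^2 = -\tfrac14(z-\bar z)^2$, and show $\mathbb{E}[z^2] = \mathbb{E}[\bar z^2] = 0$ directly via the first-moment formula Eq.\eqref{eq:average_state_expectation} applied to $\ket{\phi_1}\bra{\phi_1}$ (the $z^2$ term produces $\mathrm{Tr}$ of an operator that is traceless after one integration because it involves $\ket{\phi_2}\bra{\phi_2}\otimes\ket{\phi_2}\bra{\phi_2}$ contracted against $\hat{A}\otimes\hat{A}$ in a way that does not close up) — but the symmetry argument is cleaner and I would present that as the main proof.
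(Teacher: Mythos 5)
Your proof is correct and uses essentially the same idea as the paper: invariance of the Haar measure under a global phase of $\pi/2$, which turns $\mathrm{Re}[\braket{\phi_1|\hat{A}|\phi_2}]$ into $\pm\mathrm{Im}[\braket{\phi_1|\hat{A}|\phi_2}]$. The only (cosmetic) difference is that you rotate $\ket{\phi_1}$ and apply the substitution directly to $\mathrm{Re}[z]^2$, whereas the paper rotates $\ket{\phi_2}$ and routes through the identities $\mathrm{Re}[z]^2=\tfrac14(z^2+z^{*2}+2|z|^2)$, $\mathrm{Im}[z]^2=\tfrac14(-z^2-z^{*2}+2|z|^2)$; your observation that normality of $\hat{A}$ is not needed is also accurate.
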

\begin{proof}
    Because of the unitary-invariant property of Haar measure, $\mathbb{E}_{\phi}[f(\ket{\phi})]$ is always identical to $\mathbb{E}_{\phi}[f(e^{i\pi/2}\ket{\phi})]=\mathbb{E}_{\phi}[f(i\ket{\phi})]$ for any function $f$.
    Therefore, the following proves Eq.\eqref{eq:haar_reim}:
    \begin{align}
        \mathbb{E}_{\phi_1, \phi_2}&\left[\mathrm{Re}[\braket{\phi_1|\hat{A}|\phi_2}]^2 - \frac{1}{2}\left|\braket{\phi_1|\hat{A}|\phi_2}\right|^2\right]\\
        =&\frac{1}{4}\mathbb{E}_{\phi_1,\phi_2}\left[ \braket{\phi_1|\hat{A}|\phi_2}^2+\braket{\phi_2|\hat{A}^{\dagger}|\phi_1}^2 \right]\label{eq:lemm_2_pf1}\\
        =&\frac{1}{4}\mathbb{E}_{\phi_1,\phi_2}\left[ \braket{\phi_1|i\hat{A}|\phi_2}^2+\braket{\phi_2|(-i)\hat{A}^{\dagger}|\phi_1}^2 \right]\label{eq:lemm_2_pf2}\\
        =&\mathbb{E}_{\phi_1, \phi_2}\left[\mathrm{Im}[\braket{\phi_1|\hat{A}|\phi_2}]^2 - \frac{1}{2}\left|\braket{\phi_1|\hat{A}|\phi_2}\right|^2\right],\label{eq:lemm_2_pf3}
    \end{align}
    where Eqs.\eqref{eq:lemm_2_pf1} and \eqref{eq:lemm_2_pf3} are obtained using $\mathrm{Re}[z]^2=\frac{1}{4}[z^2+z^{*2}+2|z|^2]$ and $\mathrm{Im}[z]^2=\frac{1}{4}[-z^2-z^{*2}+2|z|^2]$, respectively, for $z\in\mathbb{C}$.
    Additionally, Eq.\eqref{eq:lemm_2_pf2} is derived by replacing $\ket{\phi_2}$ with $i\ket{\phi_2}$.
\end{proof}

\begin{lemma}\label{lemm:haar_square_diag}
    For any normal operator $\hat{A} \in \mathbb{C}^{d\times d}$, the following equality holds:
    \begin{equation}
        \mathbb{E}_{\phi}\left[ \left|\bra{\phi} \hat{A} \ket{\phi}\right|^2 \right] = \frac{\mathrm{Tr}[\hat{A}^{\dagger}\hat{A}]+|\mathrm{Tr}[\hat{A}]|^2}{d(d+1)}.
    \end{equation}
\end{lemma}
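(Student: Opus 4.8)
The plan is to reduce the claim about $\mathbb{E}_\phi[|\bra{\phi}\hat A\ket{\phi}|^2]$ to a known formula for the integral of a degree-$(2,2)$ monomial over the Haar measure on pure states, i.e.\ to the second moment of the single-state twirl. The quantity $|\bra{\phi}\hat A\ket{\phi}|^2 = \bra{\phi}\hat A\ket{\phi}\,\bra{\phi}\hat A^\dagger\ket{\phi}$ can be written as $\mathrm{Tr}\big[(\hat A\otimes \hat A^\dagger)\,(\ket{\phi}\bra{\phi})^{\otimes 2}\big]$, so taking the expectation amounts to computing $\mathbb{E}_\phi[(\ket{\phi}\bra{\phi})^{\otimes 2}]$. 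First I would invoke the standard fact that, for $\phi\sim\mathcal H(d)$,
\begin{equation*}
    \mathbb{E}_\phi\big[(\ket{\phi}\bra{\phi})^{\otimes 2}\big] = \frac{\hat{\mathbb I} + \hat{\mathrm{SWAP}}}{d(d+1)},
\end{equation*}
where $\hat{\mathrm{SWAP}}$ is the swap operator on $\mathbb{C}^d\otimes\mathbb{C}^d$; this follows from Schur--Weyl duality exactly as the analogous first-moment identity $\mathbb{E}[\hat U^\dagger \hat X\hat U] = \hat{\mathbb I}\,\mathrm{Tr}[\hat X]/d$ was used in Lemma~\ref{lemm:haar_square_independent}, since the symmetric subspace of $(\mathbb{C}^d)^{\otimes 2}$ has dimension $d(d+1)/2$ and $\mathbb{E}[(\ket\phi\bra\phi)^{\otimes 2}]$ is the normalized projector onto it.

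Next I would contract this against $\hat A\otimes\hat A^\dagger$. Using $\mathrm{Tr}[(\hat A\otimes\hat A^\dagger)\,\hat{\mathbb I}] = \mathrm{Tr}[\hat A]\,\mathrm{Tr}[\hat A^\dagger] = |\mathrm{Tr}[\hat A]|^2$ and the swap-trick identity $\mathrm{Tr}[(\hat A\otimes\hat A^\dagger)\,\hat{\mathrm{SWAP}}] = \mathrm{Tr}[\hat A\hat A^\dagger] = \mathrm{Tr}[\hat A^\dagger\hat A]$, one immediately gets
\begin{equation*}
    \mathbb{E}_\phi\big[|\bra{\phi}\hat A\ket{\phi}|^2\big] = \frac{|\mathrm{Tr}[\hat A]|^2 + \mathrm{Tr}[\hat A^\dagger\hat A]}{d(d+1)},
\end{equation*}
which is the claim. (The normality hypothesis on $\hat A$ is not actually needed for this identity, but it does no harm to keep it for consistency with the other lemmas.)

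If the authors prefer to stay self-contained and avoid quoting the symmetric-subspace projector, the alternative I would use is a purely elementary route: write $\ket\phi = \hat U\ket{0}$ with $\hat U\sim\mathcal U(d)$, expand $|\bra{0}\hat U^\dagger \hat A\hat U\ket 0|^2 = \sum_{i,j}\bar{U}_{i0}U_{j0}A_{ij}\sum_{k,l}U_{k0}\bar{U}_{l0}\bar{A}_{kl}$, and use the known fourth-moment formula for Haar-random unitary matrix entries,
$\mathbb{E}[U_{j0}\bar U_{i0}U_{k0}\bar U_{l0}] = \tfrac{1}{d(d+1)}(\delta_{ij}\delta_{kl}+\delta_{il}\delta_{kj})$,
then resum. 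The only mild obstacle is bookkeeping: making sure the two Kronecker-delta terms are matched to $|\mathrm{Tr}[\hat A]|^2$ and $\mathrm{Tr}[\hat A^\dagger\hat A]$ respectively and that the combinatorial prefactor is exactly $1/(d(d+1))$ rather than, say, $1/d^2$ as in the off-diagonal Lemma~\ref{lemm:haar_square_independent}; the distinction is precisely the appearance of the extra $+|\mathrm{Tr}[\hat A]|^2$ term and the shift $d^2\to d(d+1)$, which is the whole point of the lemma. I do not expect any genuine difficulty beyond that.
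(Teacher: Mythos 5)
Your proof is correct, and it takes a genuinely different route from the paper's. The paper exploits the normality hypothesis explicitly: it eigendecomposes $\hat{A}$, writes $\ket{\phi}=\sum_i\gamma_i\ket{\psi_i}$, reduces the expectation to the scalar moments $\mathbb{E}[|\gamma_i|^4]=2/(d(d+1))$ and $\mathbb{E}[|\gamma_i|^2|\gamma_j|^2]=1/(d(d+1))$ over the complex projective space, and cites an external theorem for those moments before resumming into $\mathrm{Tr}[\hat{A}^\dagger\hat{A}]+|\mathrm{Tr}[\hat{A}]|^2$. You instead write $|\bra{\phi}\hat{A}\ket{\phi}|^2=\mathrm{Tr}[(\hat{A}\otimes\hat{A}^\dagger)(\ket{\phi}\bra{\phi})^{\otimes 2}]$ and contract against the second-moment operator $\mathbb{E}[(\ket{\phi}\bra{\phi})^{\otimes 2}]=(\hat{\mathbb{I}}+\hat{\mathrm{SWAP}})/(d(d+1))$, which is basis-free and dispatches the computation in two lines via the swap trick. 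Both routes ultimately import a known second-moment fact (the paper's scalar moments over $\mathbb{CP}^{d-1}$ versus your symmetric-subspace projector), so neither is more self-contained than the other; what your version buys is the observation, which you correctly make explicit, that normality of $\hat{A}$ is not needed at all, whereas the paper's eigendecomposition step genuinely uses it. Your fallback via the fourth moment of Haar-unitary column entries, $\mathbb{E}[U_{j0}\bar{U}_{i0}U_{k0}\bar{U}_{l0}]=(\delta_{ij}\delta_{kl}+\delta_{il}\delta_{kj})/(d(d+1))$, is also correct and is essentially the coordinate form of the same projector identity.
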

\begin{proof}
    Because $\hat{A}$ is normal, we can perform the eigendecomposition to an arbitrary state as $\ket{\phi} = \sum_{i=1}^{d} \gamma_{i} \ket{\psi_{i}}$, where $\hat{A}\ket{\psi_i}=a_{i}\ket{\psi_{i}}$ and $\gamma_i=\braket{\psi_i|\phi}$.
    Then the Haar averaging over $\ket{\phi}$ is identical to the averaging $\{\gamma_{i}\}_{i=1}^{d}$ over $d$-dimensional complex unit sphere, $\mathbb{E}_{\phi}[\cdot] = \mathbb{E}_{\{\gamma_i\}\sim \mathbb{CP}^{d-1}}[\cdot]$, where $\mathbb{CP}$ denotes the complex projective space.
    Therefore, it can be shown that
    \begin{align}
        \mathbb{E}_{\phi}&\left[ \left| \bra{\phi}\hat{A}\ket{\phi} \right|^2 \right]\\
        =& \mathbb{E}_{\{\gamma_i\} \sim \mathbb{CP}^{d-1}}\left[\left|\sum_{i=1}^d a_i |\gamma_i|^2 \right|^2 \right]\\
        =& \sum_i |a_{i}|^2\mathbb{E}_{\bm{\gamma}}[|\gamma_i|^4]+\sum_{i\neq j}a_i^*a_j\mathbb{E}_{\bm{\gamma}}[|\gamma_i|^2|\gamma_j|^2]\\
        =& \frac{1}{d(d+1)}\left(2\sum_i |a_i|^2 + \sum_{i\neq j}a_{i}^* a_{j}\right) \label{eq:hom_avg_csphere}\\
        =& \frac{1}{d(d+1)}\left(\sum_i |a_i|^2 + \left(\sum_{i}a_{i}^*\right) \left(\sum_{i}a_{i}\right)\right)\\
        =& \frac{1}{d(d+1)}\left(\mathrm{Tr}[\hat{A}^{\dagger}\hat{A}] + \mathrm{Tr}[\hat{A}^\dagger]\mathrm{Tr}[\hat{A}]\right),
    \end{align}
    where $\mathbb{E}_{\bm{\gamma}}[\cdot]$ denotes $\mathbb{E}_{\{\gamma_i\}\sim \mathbb{CP}^{d-1}}[\cdot]$ and Eq.\eqref{eq:hom_avg_csphere} holds because of \cite[Theorem 2.6]{https://doi.org/10.1002/jcd.21379}.
\end{proof}

For the case of LCU decomposition, the Haar-averaged partial variance in Eq.\eqref{eq:LCU_partial_variance} is identically determined over $X$ and $j$ as
\begin{equation}
    \mathbb{E}_{\phi_0,\phi_k}\left[\mathrm{Var}[\hat{O}_{X}]_{\Phi_{0k;j}}\right]=1-\frac{1}{2d},
\end{equation}
by applying Lemmas \ref{lemm:haar_square_independent} and \ref{lemm:haar_reim_square_independent} with $\hat{A}=\hat{U}_j$.
Therefore, the sub-optimal shot allocation, 
\begin{equation}\label{eq:app_LCU_subopt_alloc}
    m_{jX}^{(\mathrm{subopt})}=\frac{\beta_j}{2\|\bm{\beta}\|_1},
\end{equation}
is obtained by replacing $\mathrm{Var}[\hat{O}_X]_{\Phi_{0k;j}}$ in Eq.\eqref{eq:LCU_optimal_allocation} with the expected variance, where the denominator is set to satisfy the normalization constraint, $\sum_{jX}m_{jX}^{(\mathrm{subopt})}=1$.

The variance with the sub-optimal shot allocation is derived from Eq.\eqref{eq:LCU_variance} by assigning Eq.\eqref{eq:app_LCU_subopt_alloc}, which is
\begin{equation}\label{eq:asymptotic_var_pf_lcu0}
\begin{split}
        \mathrm{Var}^{(\mathrm{LCU})}&[\bm{H}_{0k};\bm{m}^{(\mathrm{subopt})}]\\&= \frac{1}{M}\sum_{j=1}^{N_\beta}2\|\bm{\beta}\|_1^2\left(2 - |\braket{\phi_0|\hat{U}_j|\phi_k}|^2\right).
\end{split}
\end{equation}
Finally, by Lemma \ref{lemm:haar_square_independent}, the Haar averaging of $|\braket{\phi_0|\hat{U}_j|\phi_k}|^2$ both over $\phi_0$ and $\phi_k$ results in Eq.\eqref{eq:asymptotic_LCU_variance}.

Furthermore, because $\ket{\phi_k}=\hat{B}^k\ket{\phi_0}$ is an evolved state from $\ket{\phi_0}$, we can consider averaging the variance over single state $\ket{\phi_0}$, fixing the evolution operator, $\hat{B}^k$.
Then from Eq.\eqref{eq:asymptotic_var_pf_lcu0}, $|\braket{\phi_0|\hat{U}_j|\phi_k}|^2=|\braket{\phi_0|\hat{U}_j\hat{B}^k|\phi_0}|^2$ needs to be averaged over $\ket{\phi_0}$.
Using Lemma \ref{lemm:haar_square_diag} with $\hat{A}=\hat{U}_j \hat{B}^k$, we can show that 
\begin{equation*}
\mathbb{E}_{\phi_0}[|\braket{\phi_0|\hat{U}_j\hat{B}^k|\phi_0}|] = \frac{d+|\mathrm{Tr}[\hat{U}_j\hat{B}^k]|^2}{d(d+1)},
\end{equation*}
and thus
\begin{equation*}
\frac{1}{d+1}\le\mathbb{E}_{\phi_0}[|\braket{\phi_0|\hat{U}_j\hat{B}^k|\phi_0}|]\le 1,
\end{equation*}
because $0 \le |\mathrm{Tr}[\hat{U}_j\hat{B}^k]|^2\le d^2$.
Finally, the total averaged variance is bounded by
\begin{equation}
\begin{split}
    \frac{2\|\bm{\beta}\|_1^2}{M}\le\mathbb{E}_{\phi_0}\left[\mathrm{Var}^{(\mathrm{LCU})}\right]\le\frac{2\|\bm{\beta}\|_1^2}{M}\left(2 - \frac{1}{d+1}\right),
\end{split}
\end{equation}
which is approximately less than the expected variance considering the two states independently (Eq.\eqref{eq:asymptotic_LCU_variance}.)

In FH case, the shots for real and imaginary parts are identically allocated ($m_{jR}=m_{jI}=:m_j/2$) because of Lemma \ref{lemm:haar_reim_square_independent}, similar to the LCU case.
Therefore, the variances for the real and imaginary parts in Eq.\eqref{eq:FH_partial_variance} can be directly added, yielding
\begin{equation}
\begin{split}
    \mathrm{Var}^{(\mathrm{FH})} =& \sum_{j=1}^{N_{\gamma}}\frac{2}{m_j}\left(\mathrm{Var}[\hat{O}_{jR}]_{\Phi_{0,k}}+ \mathrm{Var}[\hat{O}_{jI}]_{\Phi_{0,k}}\right)\\
    =&\sum_{j=1}^{N_{\gamma}}\frac{2}{m_j}\left(2\braket{\hat{O}^2_{jX}}_{\Phi_{0k}} - |\braket{\phi_0|\hat{H}_j|\phi_k}|^2\right).
\end{split}
\end{equation}
In Eq.\eqref{eq:FH_partial_variance}, the expected second moment is determined as $\mathbb{E}_{\phi_0, \phi_k}[\braket{\hat{O}_{jX}^2}_{\Phi_{0k}}]=\mathrm{Tr}[\hat{H}_j^2]/d$, while the last term, $\mathbb{E}_{\phi_0, \phi_k}[|\braket{\phi_0|\hat{H}|\phi_k}|^2]=\mathrm{Tr}[\hat{H}_j^2]/d^2$ is obtained using Lemma \ref{lemm:haar_square_independent}.
Finally, the expected variance of $j^{\textit{th}}$ fragment is derived as below:
\begin{equation}
\begin{split}
    \mathbb{E}_{\phi_0,\phi_k}\left[ \mathrm{Var}[\hat{O}_{jR}]_{\Phi_{0,k}}+ \mathrm{Var}[\hat{O}_{jI}]_{\Phi_{0,k}}\right]\\
    = \frac{1}{d}\mathrm{Tr}[\hat{H}_j^2]\left(2-\frac{1}{d}\right),
\end{split}
\end{equation}
which results in a total averaged variance of Eq.\eqref{eq:asymptotic_FH_variance} with the sub-optimal allocation $m_j \propto \mathrm{Tr}[\hat{H}_j^2]^{1/2}$.

Similar to the LCU case, if we take the expectation only on $\ket{\phi_0}$, we have
\begin{equation}
    \mathbb{E}_{\phi_0}\left[\mathrm{Var}^{(\mathrm{FH})}\right] \le \frac{2\|\bm{\gamma}\|^2}{M}\left(2 - \frac{1}{d+1}\right).
\end{equation}

\section{Grouping Pauli Operators}\label{sec:appendix_pauli_grouping}
In this section, we review the partitioning of qubit Hamiltonian in a form of LCU and FH.
Also, the expressions of $\|\bm{\beta}\|_1$ and $\|\bm{\gamma}\|_1$ are presented with respect to the Pauli coefficients.

A qubit Hamiltonian $\hat{H}$ is expressed as
\begin{equation}\label{eq:pauli_hamiltonian}
    \hat{H} = \sum_{p=1}^{N_P}\alpha_p \hat{P}_p,
\end{equation}
where $\hat{P}_p\in \{\hat{I}, \hat{\sigma}_x, \hat{\sigma}_y, \hat{\sigma}_z\}^{\otimes N_q}$ is an $N_q$-qubit Pauli operator and its coefficient is $\alpha_p \in \mathbb{R}$.

First, we review the derivation of LCU in Pauli basis(\cite{LCU_Izmaylov, LCU_Love}), which is described as
\begin{equation}\label{eq:pauli_LCU}
    \hat{H}=\sum_{j=1}^{N_\beta} \beta_j \hat{U}_j = \sum_{j=1}^{N_\beta}\sum_{p\in \mathcal{A}_j}\alpha_{p}^{(\mathcal{A}_j)}\hat{P}_p.
\end{equation}
Here, a Pauli term $\alpha_p \hat{P}_p$ can be separated to the multiple groups $\{\mathcal{A}_j\}$.
Thus, $\alpha_p=\sum_{j:p\in\mathcal{A}_j}\alpha_p^{(\mathcal{A}_j)}$ for all $1\le p \le N_p$, should be imposed for the partitioned coefficients.
Also, if the anticommutation conditions of different Pauli operators within a group holds:
\begin{equation}
    \{\hat{P}_p, \hat{P}_q\}=2\delta_{pq}\hat{I}\quad \forall p, q \in \mathcal{A}_j, \forall j,
\end{equation}
each partition, $\sum_{p\in\mathcal{A}_j}\alpha_p^{(\mathcal{A}_j)}\hat{P}_p$, becomes a scaled unitary, $\beta_j \hat{U}_j$.
Then, $\beta_j^2=\sum_{p\in \mathcal{A}_j}\alpha_p^{(\mathcal{A}_j) 2}$ is found consequently, and the norm becomes
\begin{equation}\label{eq:Pauli_norm_LCU}
    \|\bm{\beta}\|_1 = \sum_{j=1}^{N_\beta}\left(\sum_{p\in \mathcal{A}_j}\alpha_p^{(\mathcal{A}_j)2}\right)^{1/2}.
\end{equation}
Furthermore, a circuit realization of controlled $\hat{U}_j$ is known, which requires $O(N_q|\mathcal{A}_j|)$ two-qubit gates \cite{LCU_Love}.

Now, for the case of FH, Pauli operators are grouped together if all pairs of the operators in a group commute:
\begin{equation}\label{eq:pauli_FH}
    \hat{H}=\sum_{j=1}^{N_\gamma} \hat{H}_j = \sum_{j=1}^{N_\gamma}\sum_{p\in \mathcal{C}_j}\alpha_{p}^{(\mathcal{C}_j)}\hat{P}_p,
\end{equation}
where
\begin{equation}
    \left[\hat{P}_p, \hat{P}_q\right] = 0 \quad \forall p, q \in \mathcal{C}_j, \forall j,
\end{equation}
and $\alpha_p=\sum_{j:p\in\mathcal{C}_j}\alpha_p^{(\mathcal{C}_j)}$.
For each of $\hat{H}_j$, one can construct a diagonalizing Clifford circuit, $\hat{V}_j$, taking $O(N_q\min (N_q, |\mathcal{C}_j|))$ two-qubit gates.
Furthermore, one can find the FH decomposition norm, $\|\bm{\gamma}\|_1$ in terms of Pauli coefficients:
\begin{equation}\label{eq:Pauli_norm_FH}
    \|\bm{\gamma}\|_1=\frac{1}{\sqrt{d}}\sum_{j=1}^{N_\gamma}\mathrm{Tr}[\hat{H}_j^2]^{1/2}=\sum_{j=1}^{N_\gamma}\left(\sum_{p\in\mathcal{C}_j}\alpha_p^{(\mathcal{C}_j)2}\right)^{1/2}.
\end{equation}
Here, note that we used $\frac{1}{d}\mathrm{Tr}[\hat{P}_p \hat{P}_q]=\delta_{p,q}$ because the trace of Pauli operators except the identity is zero.

The two norms described in Eqs.\eqref{eq:Pauli_norm_LCU} and \eqref{eq:Pauli_norm_FH} shares a common expression as the sum of the L2 norms of Pauli coefficients.
We can consider decomposing Eq.\eqref{eq:pauli_hamiltonian} without grouping, $\hat{H}_j=\beta_j\hat{U}_j=\alpha_j \hat{P}_j$, which is also an LCU and an FH simultaneously.
However, in such case, the norm $\|\bm{\beta}\|_1=\|\bm{\gamma}\|_1=\|\bm{\alpha}\|_1$ becomes larger than that of grouped Pauli case because the sum of L2 norms of grouped Pauli coefficients is smaller than the L1 norm.

Pauli groupings are not unique and can be translated to a clique covering problem on the (anti-)commutation graph.
The commutation graph, denoted as $G_{\mathcal{C}}=(V, E_{\mathcal{C}})$, encodes the commutativity between Pauli operators.
Specifically, the nodes in $V$ correspond to individual Pauli operators: $V=\{v_p:1\le p \le N_P\}$ with a node weight function $w(v_p)=\alpha_p$.
Also, undirected and unweighted edges connect nodes whose corresponding operators commute ($E_{\mathcal{C}}=\{(v_p, v_q):[\hat{P}_p, \hat{P}_q]=0\}$).
The anticommutation graph, $G_{\mathcal{A}}=(V, E_{\mathcal{A}})$, shares the same node set with $G_{\mathcal{C}}$ but connects nodes with anticommuting operators ($E_{\mathcal{A}}=\{(v_p, v_q):\{\hat{P}_p, \hat{P}_q\}=0\}$).
Because any two Pauli operators either commute or anticommute, these graphs are complements, meaning that $E_{\mathcal{A}}^{\mathrm{c}}=E_{\mathcal{C}}$.

In such setting, minimizing Eqs.\eqref{eq:Pauli_norm_LCU} and \eqref{eq:Pauli_norm_FH} translates to finding a clique covering that minimizes the sum of the clique weights.
Each weight is defined as a L2 norm of node weights covered by each clique.
Like other clique covering problems, this is an NP-hard problem.
However, a heuristic and greedy algorithm called $\texttt{SORTED INSERTION}$ often outperform other heuristic algorithms as demonstrated in \cite{Crawford2021efficientquantum}.
The original work on $\texttt{SORTED INSERTION}$ aimed to minimize the cost of measuring standard expectation values, so it only considered the commutation graph.
However, we extend the algorithm to the anticommutation case to reduce the cost of the LCU measurements by simply modifying its grouping condition.

In the numerical results of Section \ref{sec:numerical_analysis}, the $\hat{Z}$-type operators, which dominate the example Hamiltonians, correspond to the number operators in the fermionic representation.
In the LCU decomposition, these operators are not grouped due to their mutual commutativity, resulting in a larger LCU norm than FH norm ($\|\bm{\beta}_{\mathtt{SI}}(\hat{H})\|\ge\|\bm{\gamma}_{\mathtt{SI}}(\hat{H})\|$) because large coefficients are treated separately.
However, the shifting technique effectively eliminates these operators, significantly reducing the norm $\|\bm{\beta}_{\mathtt{SI}}(\hat{H}-\hat{T})\|$.
In the FH decomposition, $\hat{Z}$-type operators are naturally grouped together, which contribute to the norm less significantly than the LCU case.
Consequently, the impact of the shifting technique is less pronounced in the FH case compared to the LCU case.
Nevertheless, the shifted FH decomposition still achieves a smaller norm than the shifted LCU decomposition.
%If a node is shared by two more cliques, the node weight can be properly distributed to the cliques.
%if one applies a constraint that any shared nodes are not distributed .

\section{Pauli Covariance}\label{sec:appendix_pauli_covariance}
In this section derives the Pauli covariance for measuring the matrix element (Eq.\eqref{eq:pauli_covariance_real}).
Starting from the partial variance in Eq.\eqref{eq:normal_partial_variance}, and substituting $\hat{N}_j=\sum_{p\in\mathcal{G}_j}\alpha_p^{(\mathcal{G}_j)}\hat{P}_p$, which is analogous to Eq.\eqref{eq:pauli_LCU} or \eqref{eq:pauli_FH}, we obtain:
\begin{widetext}
\begin{equation}\label{eq:pauli_covariance_derive1}
    \mathrm{Var}[R_{0k;j}]=\sum_{p,q \in \mathcal{G}_j}
    \alpha_p^{(\mathcal{G}_j)} \alpha_q^{(\mathcal{G}_j)}
    \left[\frac{1}{2}\left(\braket{\phi_0|\hat{P}_p\hat{P}_q|\phi_0} + \braket{\phi_k|\hat{P}_p\hat{P}_q|\phi_k} \right) - \mathrm{Re}[\braket{\phi_0|\hat{P}_p|\phi_k}]\mathrm{Re}[\braket{\phi_0|\hat{P}_q|\phi_k}]\right].
\end{equation}
\end{widetext}
Comparing Eq.\eqref{eq:partial_variance_with_pauli_cov} and Eq.\eqref{eq:pauli_covariance_derive1} suggests defining the covariance as the expression within the bracket.
However, this definition violates the symmetric property of the covariance because $\hat{P}_p \hat{P}_q$ is not necessarily identical to $\hat{P}_q \hat{P}_p$.
Therefore, to ensure the symmetry, we superpose the product of operators in both orders, $(p, q)$ and $(q, p)$.
This results in the definition of the covariance as Eq.\eqref{eq:pauli_covariance_real}.

\section{Efficient Shifting Technique for Electronic Structure Problem}\label{sec:appendix_efficient_shifting}
Here, we describe an efficient procedure of optimizing Eq.\eqref{eq:min_norm} for an electronic structure Hamiltonian.
If $\hat{T}$ is given as Eq.\eqref{eq:shifting_operator_electronic_structure}, the number of real parameters determining $\hat{T}$ is: 
\begin{equation}
    N_{\mathrm{param}}=\left(N_{\mathrm{occ}}+N_{\mathrm{virt}}\right)\left(1+\binom{N_{\mathrm{orb}}-1}{2}\right),
\end{equation}
where $N_{\mathrm{occ}}$ and $N_{\mathrm{virt}}$ denote the number of occupied and virtual orbitals, respectively.
This number scales as $O(N_{\mathrm{orb}}^3)$ and makes the optimization computationally expensive.
In the rest of the section, we show an alternative and efficient method for the minimization of the norm by the shift technique.

We consider that the Hamiltonian $\hat{H}$ is given as an electronic structure Hamiltonian in Eq.\eqref{eq:electronic_structure_hamiltonian}, whose the indices are determined for the unique terms:
\begin{equation}\label{eq:electronic_structure_hamiltonian_precise}
    \hat{H} = \sum_{r\le s}^{N_{\mathrm{orb}}}{h_{rs}} \hat{E}_{rs} + \sum_{\substack{(p,q,r,s) \in \mathcal{P}}}g_{pqrs}(\hat{E}_{pq}\hat{E}_{rs}+\hat{E}_{rs}\hat{E}_{pq}),
\end{equation}
where
\begin{equation*}
\begin{split}
    \mathcal{P}:=\{(p,q,r,s):&(1\le p\le q \le N_{\mathrm{orb}})\\
    \wedge& (1 \le r \le s \le N_{\mathrm{orb}})\\
    \wedge& ((p < r) \vee ((p=r) \wedge (q \le s)))\\
    \wedge& \neg (p=q=r=s)\}.
\end{split}
\end{equation*}
The set $\mathcal{P}$ with the size of $|\mathcal{P}|\approx N_{\mathrm{orb}}^4/8$ represents the indices of Hermitian two-body operators avoiding the duplication from the following eight-fold symmetries:
\begin{equation*}
\resizebox{.96\hsize}{!}{$g_{pqrs}=g_{qprs}=g_{pqsr}=g_{qpsr}=g_{rspq}=g_{srpq}=g_{rsqp}=g_{srqp}$},
\end{equation*}
and squared number operators ($p=q=r=s$), which are absorbed into one-body number operators due to the idempotent property ($\hat{n}_p^2=\hat{n}_p$).

\begin{figure*}[t]
    \centering
    \subfloat[Pictorial analysis for a fixed $\bm{\mu^{\star}}$]{
        \label{fig:lin_solve_1}
        \includegraphics[width=0.3\linewidth]{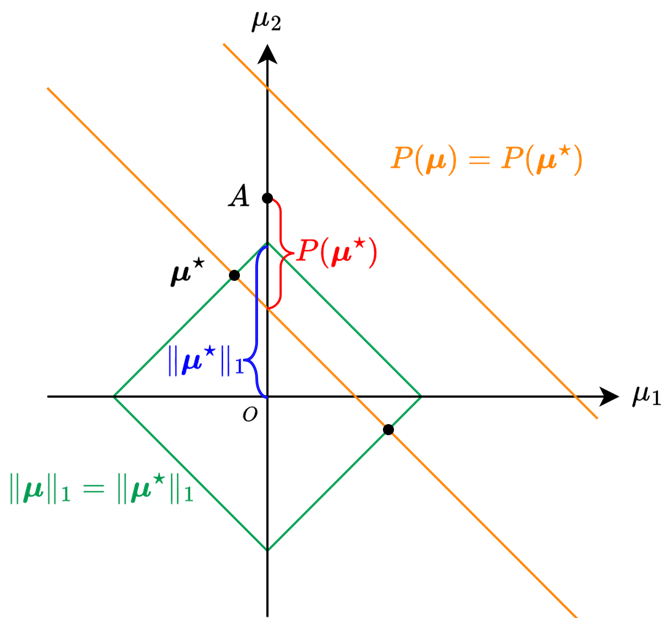}
    }
    \subfloat[Optima with a fixed 1-norm]{
        \label{fig:lin_solve_2}
        \includegraphics[width=0.3\linewidth]{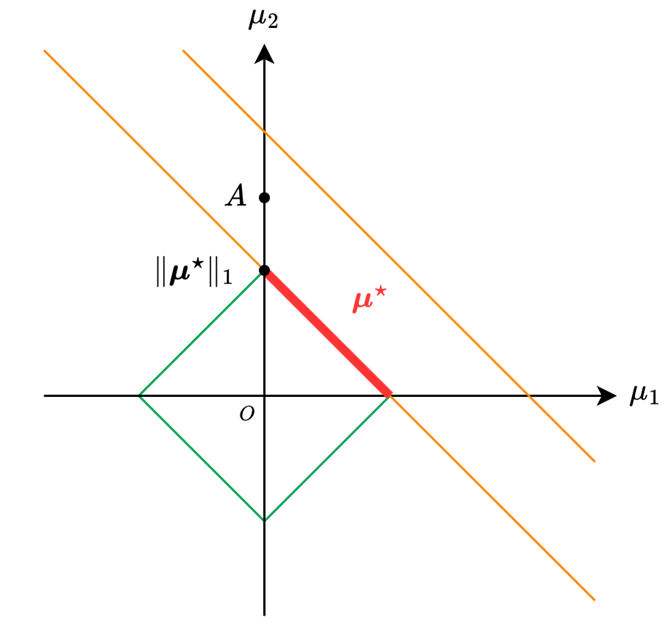}
    }
    \subfloat[Final optimal solutions]{
        \label{fig:lin_solve_3}
        \includegraphics[width=0.3\linewidth]{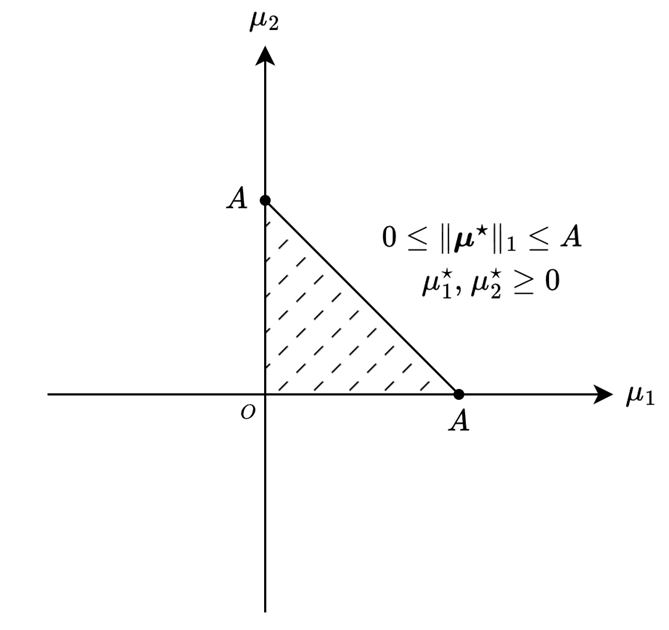}
    }
    \caption{A pictorial approach for the optimization of Eq.\eqref{eq:sep_opt_shift}.
    For the simplicity, the super- and subscripts $(rs)$ are omitted, and it is assumed that $\bm{\mu}$ is two-dimensional and $A>0$.
    In Figure (a), for a fixed $\bm{\mu^{\star}}$, the green diamond-shaped plot and the orange parallel lines denote all the points of $\bm{\mu}$ such that $\|\bm{\mu}\|_1=\|\bm{\mu^{\star}}\|_1$ and $P(\bm{\mu})=P(\bm{\mu^{\star}})$, respectively.
    It is shown that $P(\bm{\mu^{\star}})+\|\bm{\mu^{\star}}\|_1 \ge A$ must be satisfied for the intersections between two plots to be established, and thus $\bm{\mu}^{\star}$ exists.
    In Figure (b), for a fixed $\|\bm{\mu}\|_1$, the optimal points are found, as indicated by the red line, with the corresponding optimal value of $P(\bm{\mu^{\star}})+\|\bm{\mu^{\star}}\|_1 = A$.
    By varying $\|\bm{\mu}\|_1$, these points are expanded to the shaded area in Figure (c).
    For the case of $A\le0$, the plots are symmetrically transposed about the origin, resulting in the optimal points of $0\le \|\bm{\mu^{\star}}\|_1 \le -A$ and $\mu_1^\star, \mu_2^\star \le 0$.
    }
    \label{fig:lin_solve}
\end{figure*} 

In order to realize the efficient optimization, we find a part affected by the shift among the terms in Eq.\eqref{eq:electronic_structure_hamiltonian_precise}.
The terms relevant to Eq.\eqref{eq:shifting_operator_electronic_structure} are selected to construct the partial Hamiltonian, $\hat{H}^{(\mathrm{s})}$:
\begin{equation}\label{eq:effective_fragments}
\hat{H}^{(\mathrm{s})} = \sum_{r\in \mathcal{F}}2\tilde{h}_{rr}\hat{n}_r+\sum_{r<s}^{N_{\mathrm{orb}}}\tilde{h}_{rs}\hat{E}_{rs} + \sum_{q\in \mathcal{F}}\sum_{rs\in \mathcal{E}_{q}}\tilde{g}_{rsqq}\hat{E}_{rs}\hat{n}_q,
\end{equation}
where the modified coefficients are 
\begin{equation*}
\begin{split}
\tilde{h}_{rs}=&h_{rs}+\delta_{r\neq s}\sum_{q\in[N_{\mathrm{orb}}]\setminus\{r,s\}}g_{rqsq}\\        
\tilde{g}_{rsqq}=&4g_{rsqq}-2g_{rqsq}.
\end{split}
\end{equation*}
Such coefficients are determined by the following transformation to make the form of the Hamiltonian consistent to the shift operator:
\begin{equation}\label{eq:fact_fermion}
\begin{split}
    &g_{rqsq}\hat{E}_{rq}\hat{E}_{sq} + g_{rsqq}\hat{E}_{rs}\hat{E}_{qq} +\mathrm{h.c.} \\
    =& g_{rqsq}\hat{E}_{rs} +(4g_{rsqq}-2g_{rqsq})\hat{E}_{rs}\hat{n}_q,
\end{split}
\end{equation}
for all $r\neq q$ and $s \neq q$.
Because the rest of the Hamiltonian ($\hat{H}-\hat{H}^{(\mathrm{s})}$) is invariant to the shifting with respect to the term-wise grouping algorithm, denoted as $\mathtt{T}$, we focus on the following norm minimization:
\begin{equation}
    \min_{\bm{\tau}}\|\bm{\gamma}_{\mathtt{T}}(\hat{H}^{(\mathrm{s})}-\hat{T}(\bm{\tau}))\|.
\end{equation}
Here, $\hat{H}^{(\mathrm{s})} - \hat{T}(\bm{\tau})$ is given as
\begin{equation}\label{eq:shifted_effective_hamiltonian}
\begin{split}
    \hat{H}^{(\mathrm{s})}-\hat{T}(\bm{\tau}) =& \sum_{r\in\mathcal{F}}(2\tilde{h}_{rr} - \tau_r^{(1)})\hat{n}_r\\
    +&\sum_{r<s}(\tilde{h}_{rs}+\sum_{q\in{\mathrm{occ}}\setminus\{r,s\}}\tau_{qrs}^{(2)})\hat{E}_{rs}\\
    +&\sum_{q\in\mathcal{F}}\sum_{rs\in\mathcal{E}_q}(\tilde{g}_{rsqq}-\tau_{qrs}^{(2)})\hat{E}_{rs}\hat{n}_q.
\end{split}
\end{equation}
Note that the each operators in Eq.\eqref{eq:shifted_effective_hamiltonian} contributes to the norm as shown below:
\begin{equation}
\begin{gathered}
    \frac{1}{\sqrt{d}}\mathrm{Tr}\left[\hat{n}_r^2\right]^{1/2}=\frac{1}{\sqrt{2}},\\
    \frac{1}{\sqrt{d}}\mathrm{Tr}\left[\hat{E}_{rs}^2\right]^{1/2} = \sqrt{\frac{2}{d}}\mathrm{Tr}\left[(\hat{E}_{rs}\hat{n}_q)^2\right]^{1/2}=\sqrt{\frac{(3\delta_{rs}+1)}{2}},
\end{gathered}
\end{equation}
for $r\neq q$ and $s\neq q$.
Therefore, the norm with term-wise grouping is determined as
\begin{equation}\label{eq:shifted_norm_appdx}
\begin{split}
    \|\bm{\gamma}_{\mathtt{T}}(\hat{H}^{(\mathrm{s})} - \hat{T}(\bm{\tau}))\| =& \sum_{r\in \mathcal{F}}\frac{1}{\sqrt{2}}\left|2\tilde{h}_{rr}-\tau_{r}^{(1)}\right|\\
    +&\sum_{r<s}\frac{1}{\sqrt{2}}\left|\tilde{h}_{rs}+\sum_{q\in\mathrm{occ}\setminus\{r,s\}}\tau_{qrs}^{(2)}\right|\\
    +&\sum_{q\in \mathcal{F}}\sum_{rs\in\mathcal{E}_q}\sqrt{\frac{3\delta_{rs}+1}{2}}\left|\tilde{g}_{rsqq}-\tau_{qrs}^{(2)}\right|.
\end{split}
\end{equation}
The first summation becomes zero by assigning
\begin{equation}\label{eq:assign_tau_1}
    \tau_{r}^{(1)} = 2\tilde{h}_{rr} \quad\forall r\in \mathcal{F}.
\end{equation}
Furthermore, because the variables, $\tau_{qrs}^{(2)}$ for $(q,r,s)\in \mathcal{T}$ such that
\begin{equation}
\begin{split}
        \mathcal{T} &= \{(q,r,s) : 1\le r=s \le N_{\mathrm{orb}}, q\in\mathcal{F}\setminus\{r\}\}\\
    &\cup\{(q,r,s) : 1\le r<s \le N_{\mathrm{orb}}, q\in\mathrm{virt}\setminus\{r, s\}\},
\end{split}
\end{equation}
are only included in the third summation, they are also determined as 
\begin{equation}\label{eq:assign_tau_T}
    \tau_{qrs}^{(2)}=\tilde{g}_{rsqq} \quad \forall (q,r,s) \in \mathcal{T}.
\end{equation}
However, the rest of the variables, $\tau_{qrs}^{(2)}$ for $(q,r,s)\in \mathcal{T}^{\mathrm{c}}$ occur both in the second and the third summations, where
\begin{equation}
    \mathcal{T}^{\mathrm{c}} = \{(q,r,s): 1\le r < s \le N_{\mathrm{orb}}, q\in \mathrm{occ} \setminus \{r, s\}\}.
\end{equation}

After the assignment of Eqs.\eqref{eq:assign_tau_1} and \eqref{eq:assign_tau_T}, the minimization problem of Eq.\eqref{eq:shifted_norm_appdx} is then reduced to
\begin{widetext}
\begin{equation}\label{eq:reduced_shift}
\begin{split}
    \|\bm{\gamma}_{\mathtt{T}}(\hat{H}^{(\mathrm{s})} - \hat{T}'(\bm{\tau}^{(2)}_{\mathcal{T}^{\mathrm{c}}}))\| =& \frac{1}{\sqrt{2}}\sum_{r<s}\left[\left|\tilde{h}_{rs}+\sum_{q\in\mathrm{occ}\setminus\{r,s\}}\tau_{qrs}^{(2)}\right|+\sum_{q\in\mathrm{occ}\setminus\{r,s\}}\left|\tilde{g}_{rsqq}-\tau_{qrs}^{(2)}\right|\right]
    % =& \frac{1}{\sqrt{2}}\sum_{r<s}\left[\left|A_{rs}+\sum_{q\in\mathrm{occ}\setminus\{r,s\}}\mu_{qrs}^{(2)}\right|+\sum_{q\in\mathrm{occ}\setminus\{r,s\}}\left|\mu_{qrs}^{(2)}\right|\right]\\
    % =&\frac{1}{\sqrt{2}}\sum_{r<s}(P_{rs}(\bm{\mu}^{(2)}_{rs}) + \|\bm{\mu}^{(2)}_{rs}\|_1),
\end{split}
\end{equation}    
\end{widetext}
where $\hat{T}'(\bm{\tau}^{(2)}_{\mathcal{T}^{\mathrm{c}}})$ denotes the shift operator $\hat{T}$ with the partial assignment and $\bm{\tau}^{(2)}_{\mathcal{T}^{\mathrm{c}}}$ denotes the set of variables, $\tau^{(2)}_{qrs}$ with $(q,r,s)\in \mathcal{T}^{\mathrm{c}}$.
Furthermore, minimizing Eq.\eqref{eq:reduced_shift} is identical to the separated optimizations:
\begin{equation}\label{eq:sep_opt_shift}
    \min_{\bm{\mu}^{(rs)}\in \mathbb{R}^{|\mathcal{U}_{rs}|}}(P_{rs}(\bm{\mu}^{(rs)}) + \|\bm{\mu}^{(rs)}\|_1)    
\end{equation}
for each $(r, s)$ with respect to the variables, $\bm{\mu}^{(rs)}=\{\mu_{q}^{(rs)}=\tilde{g}_{rsqq} - \tau_{qrs}^{(2)}: q\in \mathcal{U}_{rs}\}$, where $\mathcal{U}_{rs}=\mathrm{occ}\setminus\{r,s\}$ and
\begin{equation}
    P_{rs}(\bm{\mu}^{(rs)})=\left| A_{rs} - \sum_{q\in\mathcal{U}_{rs}}  \mu_{q}^{(rs)} \right|,
\end{equation}
and $A_{rs}=\tilde{h}_{rs}+\sum_{q\in \mathcal{U}_{rs}}\tilde{g}_{rsqq}$.
We provide a pictorial procedure to solve Eq.\eqref{eq:sep_opt_shift} in Fig.\ref{fig:lin_solve}.
Although the solution for 2-dimensional problem is shown, this can be extended to $|\mathcal{U}_{rs}|$-dimensional problem by considering hyperplane instead of lines in the figure.
Every $\bm{\tau}^{(2)\star}_{\mathcal{T}^{\mathrm{c}}}$ satisfying
\begin{equation}\label{eq:optimal_sep_opt}
\begin{split}
    \sum_{q\in\mathcal{U}_{rs}}|\tau_{qrs}^{(2)\star} - \tilde{g}_{rsqq}| &\le |A_{rs}|,\\
    \mathrm{Sign}(\tau_{qrs}^{(2)\star} - \tilde{g}_{rsqq})&=\mathrm{Sign}(A_{rs})
\end{split}
\end{equation}
for all $r$ and $s$, identically results in the optimal value of Eq.\eqref{eq:sep_opt_shift}, which is
\begin{equation}
\begin{split}
    \|\bm{\gamma}_{\mathtt{T}}(\hat{H}^{(\mathrm{s})} - \hat{T}'(\bm{\tau}_{\mathcal{T}^{\mathrm{c}}}^{(2)\star}))\|=
    \frac{1}{\sqrt{2}}\sum_{r<s}\left| \tilde{h}_{rs} + \sum_{q\in \mathrm{occ}\setminus\{r,s\}}\tilde{g}_{rsqq}. \right|
\end{split}
\end{equation}
Note that assigning $\tau_{qrs}^{(2)\star}=\tilde{g}_{rsqq}$ for all $r,s,q\in\mathcal{T}^{\mathrm{c}}$ satisfies Eq.\eqref{eq:optimal_sep_opt}, which is analogous to Eq.\eqref{eq:assign_tau_T}.

Therefore, we conclude that the parameters
\begin{align}
    \tau_{r}^{(1)} =& 2h_{rr}\\
    \tau_{qrs}^{(2)} =& 4g_{rsqq} - 2g_{rqsq}
\end{align}
lead to the optimal shift with respect to the term-wise grouping algorithm.
However, strictly to say, the above simplification only holds for the fragment Hamiltonian, as $\hat{n}_r$ and $\hat{E}_{rs}\hat{n}_q$ are Hermitian, not unitary.
Although $\hat{n}_q$ can be written as a unitary operator ($\hat{r}_{q}=2\hat{n}_q-1$), our current understanding of representing a unitary operator as a linear combination of one-body excitation operators remains insufficient to establish an analogous reduction for the LCU case.
Therefore, this work adopts the same shift operator for both LCU decomposition as an interim solution, leaving the parameter reduction for the LCU case as a future work.

%\section{Fermionic Operator Basis}

% The \nocite command causes all entries in a bibliography to be printed out
% whether or not they are actually referenced in the text. This is appropriate
% for the sample file to show the different styles of references, but authors
% most likely will not want to use it.
\nocite{*}

\bibliography{revised}% Produces the bibliography via BibTeX.

\end{document}